\newcommand{\mt}{\mathtt}
\newcommand{\abs}[1]{\lvert#1\rvert}
\begin{document}
\title{Conditional automatic complexity\\ and its metrics\thanks{This work was partially supported by a grant from the Simons Foundation (\#704836 to Bj\o rn Kjos-Hanssen)}}
\author{Bj{\o}rn Kjos-Hanssen\inst{1}\orcidID{0000-0002-6199-1755}}
\authorrunning{B. Kjos-Hanssen}%
\institute{University of Hawai\textquoteleft i at M\=anoa, Honolulu HI 96822, USA\\
\email{bjoernkh@hawaii.edu}\\
\url{https://math.hawaii.edu/wordpress/bjoern/}}
\maketitle
\begin{abstract}
Li, Chen, Li, Ma, and Vit\'anyi (2004) introduced a similarity metric based on Kolmogorov complexity. It followed work by Shannon in the 1950s on a metric based on entropy. We define two computable similarity metrics, analogous to the Jaccard distance and Normalized Information Distance, based on conditional automatic complexity and show that they satisfy all axioms of metric spaces.
\keywords{Automatic complexity  \and Kolmogorov complexity \and Jaccard distance.}
\end{abstract}
\section{Introduction}

	In this article we show that metrics analogous to the Jaccard distance and the Normalized Information Distance can be defined based on conditional nondeterministic automatic complexity $A_N$.
	Our work continues the path of Shannon (1950) on entropy metrics and G\'acs (1974) on symmetry of information among others.

	Shallit and Wang (2001) defined the automatic complexity of a word $w$ as, somewhat roughly speaking, the minimum number of states of a finite automaton that accepts $w$ and no other word of length $\abs{w}$. This definition may sound a bit artificial, as it is not clear the length of $w$ is involved in defining the complexity of $w$. In this article we shall see how \emph{conditional} automatic complexity neatly resolves this issue.

	\begin{definition}[{\cite{MR3386523,MR1897300}}]\label{precise}
		Let $L(M)$ be the language recognized by the automaton $M$.
		Let $x$ be a sequence of finite length $n$.
		The (unique-acceptance) nondeterministic automatic complexity
		$A_N(w)=A_{Nu}(w)$ of a word $w$ is
		the minimum number of states of an NFA $M$
		such that $M$ accepts $w$ and the number of walks
		along which $M$ accepts words of length $\abs{w}$ is 1.

		The exact-acceptance nondeterministic automatic complexity
		$A_{Ne}(w)$\index{$A_{Ne}(x)$} of a word $w$ is
		the minimum number of states of an NFA $M$
		such that $M$ accepts $w$ and $L(M)\cap\Sigma^{\abs{w}}=\{w\}$.

		The (deterministic) automatic complexity $A(w)$ is the minimum number of states of a DFA $M$ with $L(M)\cap\Sigma^{\abs{w}}=\{w\}$.

		Finally, $A^-(w)$ is the minimum number of states of a deterministic but not necessarily complete (total) NFA with $L(M)\cap\Sigma^{\abs{w}}=\{w\}$.
	\end{definition}

	\begin{remark}
		$A^-(w)$ is so named because it satisfies $A^-(w)\le A(w)\le A^-(w)+1$.
	\end{remark}

	\begin{lemma}\label{lem:cocoon-referee}
		For each word $w$, we have $A_{Ne}(w)\le A_{Nu}(w)$.
	\end{lemma}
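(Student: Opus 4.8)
The plan is to observe that no new construction is needed: any NFA witnessing $A_{Nu}(w)$ already witnesses $A_{Ne}(w)$ with the same number of states. Concretely, I would let $M$ be an NFA on $q = A_{Nu}(w)$ states that accepts $w$ and along which exactly one walk spells a word of length $n = \abs{w}$. Since $M$ accepts $w$, that unique walk is spelled by $w$ itself.

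The one step to carry out is to deduce $L(M)\cap\Sigma^{\abs{w}}=\{w\}$ from the unique-walk condition. Suppose $w'\in L(M)$ with $\abs{w'}=n$. Then $M$ has an accepting walk spelled by $w'$. A walk determines the word it spells, so if $w'\neq w$ this walk is distinct from the walk spelled by $w$, giving at least two accepting walks among length-$n$ inputs and contradicting uniqueness. Hence $w'=w$, so $L(M)\cap\Sigma^{\abs{w}}=\{w\}$. Therefore the same automaton $M$ satisfies the defining condition for $A_{Ne}$, and $A_{Ne}(w)\le q=A_{Nu}(w)$.

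I do not anticipate a genuine obstacle; the argument is just an unwinding of the two definitions. The only point that warrants a moment's care is that the quantity appearing in the definition of $A_{Nu}$ counts \emph{walks} (i.e., with multiplicity), so the value $1$ genuinely forbids a second length-$n$ word in $L(M)$, not merely a second accepting walk for $w$.
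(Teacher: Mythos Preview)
Your proof is correct and takes essentially the same approach as the paper's: both observe that any NFA witnessing $A_{Nu}(w)$ already witnesses $A_{Ne}(w)$, since unique acceptance implies exact acceptance. The paper states this in a single sentence, whereas you have simply unwound the definitions more explicitly.
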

	\begin{proof}
		We simply note that if $M$ uniquely accepts $w$, then $M$ exactly accepts $w$.
	\end{proof}

	\begin{remark}\label{cocoon-referee}
		Let $w=00$ and let $M$ be the following NFA:
		\[
		\xymatrix{
		\text{start}\ar[r] & *+[Fo]{q_0}\ar[r]_0\ar@(ul,ur)^0 & *+[Foo]{q_0}\ar@(ul,ur)^0\\
		}
		\]
		We see that $M$ accepts $w$ on two distinct walks, $(q_0,q_0,q_1)$ and $(q_0,q_1,q_1)$. Hence $M$ exactly accepts $w$, but does not uniquely accept $w$.
		This example is contrived in the sense that we could remove an edge without changing $L(M)$, but it is not clear that this will always be possible for other NFAs.
		Thus the question whether $A_{Nu}=A_{Ne}$, considered in \cite{MR3938583}, remains open.
	\end{remark}

\section{Conditional complexity}
	\begin{definition}[Track of two words]
		Let $\Gamma$ and $\Delta$ be alphabets.
		Let $n\in\mathbb N$, $x\in\Gamma^n$ and $y\in\Delta^n$.
		When no confusion with binomial coefficients is likely, we let $\binom{a}{b}=(a,b)\in\Sigma\times\Delta$.
		The \emph{track} of $x$ and $y$, $x\# y\in (\Gamma\times\Delta)^n$, is defined to be the word
		\[
			\binom{x_0}{y_0} \binom{x_1}{y_1}\dots \binom{x_{n-1}}{y_{n-1}},
		\]
		 which we may also denote as $\binom{x}y$.
	\end{definition}
	\begin{definition}[Projections of a word]
		Let $\Gamma$ and $\Delta$ be alphabets.
		Let $n\in\mathbb N$, $x\in\Gamma^n$ and $y\in\Delta^n$.
		The projections $\pi_1$ and $\pi_2$ are defined by
		$\pi_1(x\# y)=x$, $\pi_2(x\# y)=y$.
	\end{definition}

	$x\# y$ can be thoughts of as a parametrized curve $i \mapsto (x_i,y_i)$. The symbol $\#$ reminds us of the two ``tracks'' corresponding to $x$ and $y$.
	This use of the word ``track'' can be found in \cite{Shallit:2008:SCF:1434864}.

	\begin{theorem}\label{jun22-23}
		There exist words $x, y$ with $A_N(x\# y)\not\le A_N(x)+A_N(y)$.\footnote{Recall that $A_N:=A_{Nu}$.}
	\end{theorem}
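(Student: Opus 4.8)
The plan is to first record the one inequality that holds for free, and then exhibit a pair $(x,y)$ for which the sum on the right-hand side is too small to be an upper bound. If $M$ is an NFA that uniquely accepts $z=x\# y$, relabel each edge $p\xrightarrow{(a,b)}q$ of $M$ to $p\xrightarrow{a}q$; the result $\pi_1(M)$ keeps the same state set, start state and accepting states, and its walks (ignoring labels) are exactly those of $M$, so the unique length-$\abs z$ accepting walk of $M$, which spells $\binom xy$, is also the unique length-$\abs z$ accepting walk of $\pi_1(M)$, where it spells $x=\pi_1(\binom xy)$. Hence $A_N(x)\le A_N(x\# y)$, and symmetrically $A_N(y)\le A_N(x\# y)$; thus $A_N(x\# y)\ge\max\{A_N(x),A_N(y)\}$ always. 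The theorem is the assertion that the maximum here cannot be strengthened to a sum.

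For the witness I would superpose two cheap periodic words with coprime periods. Fix $k\ge 3$ and put $x=(01)^{6k}\in\{0,1\}^{12k}$ and $y=(012)^{4k}\in\{0,1,2\}^{12k}$. A two-state cycle uniquely accepts $x$ and a three-state cycle uniquely accepts $y$, so $A_N(x)+A_N(y)\le 5$. The track $z:=x\# y$ satisfies $z_i=(i\bmod 2,\ i\bmod 3)$, so by the Chinese Remainder Theorem the six letters $z_0,\dots,z_5$ are pairwise distinct and $z=w^{2k}$, where $w=z_0z_1\cdots z_5$ is a \emph{rainbow} block (all letters distinct) of length $6$. It thus suffices to prove $A_N(z)\ge 6$, since then $A_N(x\# y)\ge 6>5\ge A_N(x)+A_N(y)$. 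The same construction works for every coprime pair $p,q\ge 2$ with $pq>p+q$, i.e.\ all of them except $\{2,2\}$; the only role of the ternary letter in $y$ is to force $w$ to be rainbow.

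To prove $A_N(z)\ge 6$, suppose for contradiction that an NFA $M$ with at most $5$ states uniquely --- hence, by Lemma~\ref{lem:cocoon-referee}, exactly --- accepts $z$, and let $r_0,\dots,r_{12k}$ be its unique accepting walk. For each $j<2k$ the six vertices $r_{6j},\dots,r_{6j+5}$ take at most $5$ values, so two of them coincide, giving a loop $r_{u_j}=r_{v_j}$ with $6j\le u_j<v_j\le 6j+5$ and length $\ell_j:=v_j-u_j\in\{1,\dots,5\}$; note the loop lies strictly inside the window $[6j,6j+6)$. These $2k\ge 6$ loops lie in pairwise disjoint windows, so by pigeonhole on the lengths there are windows $j_1<j_2$ whose loops have a common length $\ell$, and they are separated by at least one position since $v_{j_1}\le 6j_1+5<6j_2\le u_{j_2}$. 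Now modify the walk: run it to $r_{u_{j_1}}$, skip the first loop (legitimate since $r_{u_{j_1}}=r_{v_{j_1}}$), continue along the original walk to $r_{u_{j_2}}$, traverse the second loop twice, and finish along the original walk. This is an accepting walk of length $12k$, so its label word $z'$ lies in $L(M)\cap\Sigma^{12k}=\{z\}$; but $z$ and $z'$ disagree at position $u_{j_1}$, where $z$ has the letter $z_{u_{j_1}}$ while $z'$ has $z_{u_{j_1}+\ell}$, and $z_{u_{j_1}}\ne z_{u_{j_1}+\ell}$ because $0<\ell<6$ and $w$ is rainbow. This contradiction yields $A_N(z)\ge 6$, and the theorem follows.

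The single step that really needs care is the last one: the walk surgery must genuinely \emph{change} the accepted word rather than re-thread $z$ along a different walk, and guaranteeing this uniformly --- no matter which pair of loops the pigeonhole produces --- is exactly what forces the block $w$ to have distinct letters, and hence what prevents $y$ from being binary. I expect this interplay to be the main obstacle; it can, if one prefers, be sidestepped by verifying $A_N(z)\ge 6$ for a single small explicit $k$ via the exhaustive automatic-complexity search already routine in this area.
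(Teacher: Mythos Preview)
Your argument is correct. The walk-surgery step is sound: skipping a length-$\ell$ loop in one window and doubling a length-$\ell$ loop in a later, disjoint window preserves both the length and the accepting state, and because the period-$6$ block $w$ is a permutation word the new label word differs from $z$ at position $u_{j_1}$, contradicting exact (hence unique) acceptance. One small imprecision in the first paragraph: when you project $M$ to $\pi_1(M)$, distinct edges $(a,b_1),(a,b_2)$ collapse, so walks of $\pi_1(M)$ are not literally ``exactly those of $M$''; but the inequality you need (at most one accepting walk survives, and the projection of the unique one does survive) still goes through.

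The paper takes a different and much shorter route: it simply exhibits the binary pair $x=(010)^4$, $y=(01)^6$ of length $12$ and reports the computed values $A_N(x\#y)=6$, $A_N(x)=3$, $A_N(y)=2$. So your remark that the rainbow requirement ``prevents $y$ from being binary'' applies only to your particular lower-bound technique, not to the theorem itself. What your approach buys is a computer-free proof and an infinite family of witnesses; in fact your lower bound $A_N(w^{2k})\ge\abs{w}$ for a permutation word $w$ is essentially the content of Proposition~\ref{jun24-23-III} (proved there via \Cref{thm:simple_implies_square_powers_unique_lengths} rather than your direct surgery), which the paper later uses for the $J(x,y)=1$ analysis. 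The paper's approach, in turn, stays within the binary alphabet and dispatches the theorem in one line.
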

	\begin{proof}
		Let $x=(010)^4$ and $y=(01)^6$. Then we check that $A_N(x\#y)=6$, $A_N(x)=3$, and $A_N(y)=2$.
	\end{proof}
	\begin{definition}\label{word-as-permutation}
		A permutation word is a word that does not contain two occurrences of the same symbol.
	\end{definition}

	\begin{theorem}
		For all words $x, y$, we have $\max\{A_N(x),A_N(y)\}\le A_N(x\# y)$.
		There exist words $x, y$ with $\max\{A^-(x),A^-(y)\}\not\le A^-(x\# y)$.
	\end{theorem}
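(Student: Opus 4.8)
First I would prove the inequality $\max\{A_N(x),A_N(y)\}\le A_N(x\#y)$ by projecting automata. Take an NFA $M$ with $A_N(x\#y)$ states that uniquely accepts $x\#y$, where $n=\abs x=\abs y$, and let $M_1$ be $M$ with every transition label $\binom ab$ replaced by $a$, keeping the same states, start state, and final states. I claim $M_1$ uniquely accepts $x$. Any length-$n$ walk in $M_1$ from the start state to a final state is, vertex by vertex, a length-$n$ start-to-final walk in $M$, since each $M_1$-edge lifts to an $M$-edge; by uniqueness it is the walk along which $M$ accepts $x\#y$, so its $M_1$-label is $\pi_1(x\#y)=x$. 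Conversely, projecting $M$'s unique accepting walk gives such a walk of $M_1$ labeled $x$. Hence $A_N(x)\le A_N(x\#y)$, and the same argument applied with $\pi_2$ gives $A_N(y)\le A_N(x\#y)$.

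For the second statement I would first record why the projection argument does not adapt to $A^-$: the projection of a partial \emph{deterministic} automaton need not be deterministic, since two transitions $\binom ab$ and $\binom a{b'}$ out of one state collapse to two $a$-transitions out of the corresponding state. So a small partial DFA for $x\#y$ yields only a small \emph{nondeterministic} automaton for $x$, and all that survives is the chain $A_N(x)\le A_N(x\#y)\le A^-(x\#y)$, the last step because a partial DFA exactly accepting $z$ is an NFA uniquely accepting $z$. Consequently any witnessing pair must satisfy $A_N(x)<A^-(x)$ (or the same with $y$): the example has to live precisely where deterministic and nondeterministic automatic complexity separate.

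The plan for producing such a witness is to take $x$ with $A^-(x)$ strictly above $A_N(x)$ — such $x$ exist, and, in the style of the proof of Theorem~\ref{jun22-23}, one may hope to write down a short explicit pair and check all three complexities by computer — and to build $y$ so that the second track records exactly the nondeterministic choices an NFA for $x$ must guess. Concretely, fix a small NFA $N$ that uniquely accepts $x$, with accepting walk $s_0,\dots,s_n$, and let $y_i$ encode the step $s_i\to s_{i+1}$, together if necessary with a bounded amount of positional information, so that the automaton following this walk is deterministic and accepts no other word of length $n$ on the composite alphabet; this gives $A^-(x\#y)\le\abs N+O(1)$, after which it remains to show $A^-(x)$ exceeds that value.

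I expect the lower bound $A^-(x)>A^-(x\#y)$ to be the main obstacle. One cannot simply pump, since repeating a cycle changes the length of the accepted word; instead one argues that a partial DFA for $x$ with too few states must, by pigeonhole, visit some state at least three times along its accepting walk, producing two cycles through that state, and that interchanging the two cycles yields a \emph{second} accepted word of length exactly $\abs x$ unless the cycles are powers of a common word. So the task reduces to choosing $x$ (hence $N$, hence $y$) so that $x$ has no such degenerate periodicity while $\abs N$ stays strictly below the resulting threshold, and calibrating these two estimates so they genuinely separate is the delicate point.
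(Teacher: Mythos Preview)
Your projection argument for the first inequality is correct; the paper in fact omits that half entirely and proves only the second claim.

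For the second claim your idea is sound, but you have not carried it through, and the ``main obstacle'' you flag is not there. Take an NFA $N$ with state set $Q$, $\lvert Q\rvert=A_N(x)$, uniquely accepting $x$ along the walk $s_0,\dots,s_n$, and set $y_i=s_{i+1}\in Q$. Let $M$ be the partial DFA on $Q$ with $\delta_M(q,(a,q'))=q'$ exactly when $(q,a,q')\in\delta_N$, and with the start and final states of $N$. This is deterministic by construction, and any length-$n$ accepting walk of $M$ projects edge for edge to a length-$n$ accepting walk of $N$, hence to the unique one; so $L(M)\cap(\Gamma\times Q)^n=\{x\#y\}$ and $A^-(x\#y)\le A_N(x)$ with no $+O(1)$ and no positional padding. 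Since you already grant that words with $A^-(x)>A_N(x)$ exist, the inequality $A^-(x)>A^-(x\#y)$ is immediate. The pigeonhole and periodicity discussion in your final paragraph is therefore unnecessary: you are trying to re\-prove from scratch a separation you have already assumed.

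The paper's route is different and shorter. Choose $y$ to be any \emph{permutation word} of length $n$ (all symbols distinct), independent of $x$. Then $x\#y$ is itself a permutation word, and for a permutation word the Hyde ``fold'' automaton on states $q_0,\dots,q_{\lfloor n/2\rfloor}$, traversed forward then back, is already a partial DFA because the two outgoing labels at each interior state are distinct; hence $A^-(x\#y)\le\lfloor n/2\rfloor+1$. Now take $x$ with $A^-(x)>n/2+1$, which the paper cites from the LFSR literature. Your state-encoding second track and the paper's permutation second track play the same role --- making a small nondeterministic witness deterministic --- but the paper's choice requires no reference to any particular NFA for $x$ and reduces the upper bound to the standard Hyde construction.
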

	\begin{proof}
		Let $x$ be a word of some length $n$ with $A^-(x)>n/2+1$.
		An example can be found among the maximum-length sequences for linear feedback shift registers as observed in \cite{MR3828751}.
		Let $y$ be a permutation word (\Cref{word-as-permutation}) of the same length.
		Whenever $y$ is a permutation word, so is $x\# y$. Therefore $A^-(x\# y)\le n/2+1< A^-(x)$.
	\end{proof}

	\begin{definition}\label{conditional-automatic-complexity}
		Let $\Gamma$ and $\Delta$ be alphabets. Let $n\in\mathbb N$ and $x\in\Gamma^n, y\in\Delta^n$.
		The \emph{conditional (nondeterministic) automatic complexity of $x$ given $y$}, $A_N(x\mid y)$, is
		the minimum number of states of an NFA over $\Gamma\times\Delta$ such that \Cref{cond-one} and \Cref{cond-two} hold.
		\begin{enumerate}[(i)]
			\item\label{cond-one} Let $m$ be the number of accepting walks of length $n=\abs{x}=\abs{y}$ for which the word $w$ read on the walk satisfies $\pi_1(w)=y$.
			Then $m=1$.
			\item\label{cond-two} Let $w$ be the word in \Cref{cond-one}. Then $\pi_2(w)=x$.
		\end{enumerate}
	\end{definition}
	An example of \Cref{conditional-automatic-complexity} is given in \Cref{cocoon-ref-example}.

	The conditional complexity $A(x\mid y)$ must be defined in terms of a unique sequence of edges rather than a unique sequence of states,
	since we cannot assume that there is only one edge from a given state $q$ to given state $q'$.

	\begin{theorem}\label{jun16-23}\label{jun17-23}
		$A_N(x\#y)\le A_N(x\mid y)\cdot A_N(y)$.
		In relativized form, $A_N(x\mid z) \le A_N(y\mid z) \cdot A_N(x\mid y\#z)$.
	\end{theorem}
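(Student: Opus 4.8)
The plan is to prove both inequalities by the same device: a product-automaton construction in which a witnessing NFA for the conditional complexity on the right-hand side is run in parallel with a witnessing NFA for the remaining (ordinary, resp.\ relativized) complexity, and one checks that unique acceptance survives the construction. Since the number of states of a product of two NFAs is the product of the two state counts regardless of how many initial or accepting states each factor has, the size bound will be automatic; the content is entirely in the ``exactly one accepting walk'' bookkeeping.

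For the first inequality, let $M_1$ witness $A_N(x\mid y)$: it reads letters having a $y$-track and an $x$-track, exactly one accepting walk of length $n$ has $y$-track equal to $y$, and that walk has $x$-track $x$. Let $M_2$ be an NFA over $\Delta$ uniquely accepting $y$. I would form the NFA $M$, reading letters with an $x$-track and a $y$-track, whose states are the pairs $(p,q)$ with $p$ a state of $M_1$ and $q$ a state of $M_2$, and which has an edge $(p,q)\to(p',q')$ reading $\binom ab$ for every pair consisting of an edge of $M_1$ reading $y$-letter $b$ and $x$-letter $a$ together with an edge $q\xrightarrow b q'$ of $M_2$ (parallel edges kept with multiplicity); initial and accepting states are the pairs of such. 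Then accepting walks of length $n$ in $M$ correspond bijectively to pairs of accepting walks of length $n$ in $M_1$ and $M_2$ that agree on their $\Delta$-valued track. Since $M_2$ has exactly one accepting walk of length $n$, that track is forced to be $y$; the $M_1$-component then has $y$-track $y$, hence is the unique such walk and has $x$-track $x$. So $M$ has exactly one accepting walk of length $n$, reading $\binom xy=x\#y$, which yields $A_N(x\#y)\le A_N(x\mid y)\cdot A_N(y)$.

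For the relativized inequality I would argue the same way, additionally guessing the middle word. Let $M_1$ witness $A_N(x\mid y\#z)$ — its letters have a $y\#z$-track and an $x$-track, exactly one accepting walk of length $n$ has $y\#z$-track $y\#z$, and that walk has $x$-track $x$ — and let $M_2$ witness $A_N(y\mid z)$. Build $M$ with letters having a $z$-track and an $x$-track, with states the pairs $(p,q)$, and an edge $(p,q)\to(p',q')$ reading $z$-letter $c$ and $x$-letter $a$ for every $b\in\Delta$ for which $M_1$ has an edge with $y\#z$-letter $\binom bc$ and $x$-letter $a$ and $M_2$ has an edge with $z$-letter $c$ and $y$-letter $b$ (one edge of $M$ per such $b$). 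Accepting walks of length $n$ in $M$ biject with compatible pairs of accepting walks in $M_1$ and $M_2$ sharing their $z$- and $y$-valued tracks. Now look at the accepting walks of $M$ of length $n$ whose $z$-track is $z$: the $M_2$-component of such a walk has $z$-track $z$, so by the defining property of $M_2$ there is exactly one, and it forces the $y$-track to equal $y$; the $M_1$-component then has $y\#z$-track $y\#z$, so there is exactly one, and it forces the $x$-track to equal $x$. Hence $M$ has exactly one accepting walk of length $n$ with $z$-track $z$, and it has $x$-track $x$, which is precisely what \Cref{conditional-automatic-complexity} asks of a witness for $A_N(x\mid z)\le A_N(y\mid z)\cdot A_N(x\mid y\#z)$.

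The one genuinely delicate point, which I expect to be the main obstacle, is the ``$m=1$'' verification in the relativized case: the existential guess of the middle letter $b$ at each step could a priori spawn many accepting walks of $M$ with $z$-track $z$ but differing $y$-tracks, and the argument closes only because the defining property of $M_2$ recovers $y$ uniquely from $z$. Everything else reduces to the standard fact that in a product automaton, with parallel edges counted with multiplicity, walks decompose uniquely into walks of the two factors; I would prove that once and invoke it for both parts.
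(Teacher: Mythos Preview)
Your proposal is correct and follows essentially the same route as the paper: a product of the two witnessing NFAs, with the uniqueness in the second factor pinning down the $y$-track and then the conditional uniqueness in the first factor pinning down the $x$-track. The paper in fact only writes out the unrelativized case and leaves the relativized one implicit; your treatment of the relativized version, including the explicit existential guess of the intermediate letter $b$ and the remark that one must keep a separate product edge for each such $b$ so that walks of $M$ biject with compatible pairs, is exactly the right refinement and is more careful than what the paper records.
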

	\begin{proof}
		We describe the unrelativized form only.
		We use a certain product of NFAs.\footnote{This construction may well have appeared elsewhere but we are not aware of it.}
		Let two NFAs
		\[
			M_1=(Q_1,\Gamma\times\Delta,\delta_1,q_{0,1},F_1),\quad
			M_2=(Q_2,\Gamma, \delta_2, q_{0,2},F_2)
		\]
		be given.
		The product is $M_1\times_1 M_2=(Q_1\times Q_2, \Delta, \delta, (q_{0,1},q_{0,2}), F_1\times F_2)$ where
		$\delta((q, q'),a)\ni (r, r')$ if $\delta_1(q,(b,a))\ni r$ and $\delta_2(q',b)\ni r'$ for some $b$.

		(We can also form $M_1\times_2 M_2$ where $\delta((q, q'),(b,a))\ni (r, r')$ if $\delta_1(q,(b,a))\ni r$ and $\delta_2(q',b)\ni r'$.)

		For a walk $w$, let $\mathrm{word}(w)$ be the word read on the labels of $w$.
		Consider an accepting walk $w$ from $(q_1,q_2)$ to $(r_1,r_2)$.
		By definition of the start and final states of $M_1\times_1 M_2$, the projection $\pi_1(w)$ is also accepting.
		Hence by \emph{the $A_N(y)$ witness assumption} $\pi_1(w)$ is the only accepting walk of its length and $\mathrm{word}(\pi_1(w))=y$.
		Since $\mathrm{word}(\pi_1(w))=y$, by \emph{the $A_M(x\mid y)$ witness assumption} $w$ is the unique walk with $\mathrm{word}(\pi_1(w))=y$,
		and $\mathrm{word}(\pi_2(w))=x$.
		Thus the accepted word is $x\# y$ and $w$ is the unique accepting walk of its length.
	\end{proof}

	\begin{theorem}\label{cor-jun16-23}
		There exist $x$ and $y$ with $A_N(x\#y) \ne A_N(x\mid y)\cdot A_N(y)$.
	\end{theorem}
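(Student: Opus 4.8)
The plan is to leverage \Cref{jun16-23} together with an elementary divisibility observation, so that no conditional complexity ever has to be computed. Since \Cref{jun16-23} gives $A_N(x\#y)\le A_N(x\mid y)\cdot A_N(y)$ and the right-hand side is a multiple of $A_N(y)$, it suffices to exhibit words $x,y$ of equal length with $A_N(y)\nmid A_N(x\#y)$; then $A_N(x\#y)$ cannot equal the (divisible-by-$A_N(y)$) product, and in fact $A_N(x\#y)<A_N(x\mid y)\cdot A_N(y)$.

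Concretely I would take $\Sigma=\{0,1\}$, $x=0011$ and $y=0101$. Then $x\#y=\binom{0}{0}\binom{0}{1}\binom{1}{0}\binom{1}{1}$ uses all four letters of $\Sigma\times\Sigma$, i.e.\ it is a permutation word (\Cref{word-as-permutation}) of length $4$. The first step is $A_N(y)=2$: this is witnessed by the two-state NFA with transitions $q_0\xrightarrow{0}q_1$, $q_1\xrightarrow{1}q_0$, start state $q_0$, and final set $\{q_0\}$ (the unique walk of length $4$ from $q_0$ is $q_0q_1q_0q_1q_0$, which reads $0101$), while $A_N(y)\ge 2$ since a one-state NFA can uniquely accept only a unary word. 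The second step is $A_N(x\#y)=3$. Writing $s_0s_1s_2s_3$ for $x\#y$, the three-state NFA with transitions $q_0\xrightarrow{s_0}q_1$, $q_1\xrightarrow{s_1}q_2$, $q_2\xrightarrow{s_2}q_2$, $q_2\xrightarrow{s_3}q_0$, start $q_0$, final $\{q_0\}$, has exactly one accepting walk of length $4$: its length-$4$ walks read $s_0s_1s_2s_2$, $s_0s_1s_2s_3$, $s_0s_1s_3s_0$, and only the second ends in $q_0$. For the matching lower bound one argues that no two-state NFA uniquely accepts a length-$4$ word whose four letters are distinct: running through the $2\times 2$ patterns of edge multiplicities, either some state carries two self-loops, or two states each carry one, in which case the crossing edges forced by reading four distinct labels already create a second accepting walk of length $4$; or there are too few self-loops to realize any length-$4$ walk on four distinct labels at all.

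With $A_N(y)=2$ and $A_N(x\#y)=3$ in hand, the theorem follows at once: $2\nmid 3$, so by \Cref{jun16-23} we have $3=A_N(x\#y)<A_N(x\mid y)\cdot A_N(y)$, and in particular $A_N(x\#y)\ne A_N(x\mid y)\cdot A_N(y)$.

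The one point requiring genuine care, and the step I expect to be the main obstacle, is the lower bound $A_N(x\#y)\ge 3$, i.e.\ that a length-$4$ permutation word is not uniquely accepted by any two-state NFA. This is a bounded case analysis rather than a deep fact, and it is consistent with the known automatic complexity of short permutation words; but it is the only nonroutine ingredient, everything else being an explicit automaton or a direct appeal to \Cref{jun16-23}. If one wishes to avoid it, an alternative is to compute $A_N(x\mid y)$ explicitly and observe $A_N(x\mid y)\cdot A_N(y)>A_N(x\#y)$ directly; but the divisibility route is shorter and uses only results already established.
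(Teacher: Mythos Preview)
Your proposal is correct and follows essentially the same approach as the paper: exhibit $x,y$ with $A_N(y)=2$ and $A_N(x\#y)=3$, then use that $A_N(x\mid y)$ is an integer (the paper takes $y=\mt{0001}$, $x=\mt{0123}$ and simply asserts the values). For your ``main obstacle'', the lower bound $A_N(x\#y)\ge 3$, you can bypass the case analysis entirely: a permutation word is square-free, so \Cref{jun24-23-sent-men-godt} with $k=2$ gives $A_N(x\#y)\ge (4+1)/2>2$.
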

	\begin{proof}
		Let $y=\mt{0001}$ and $x=\mt{0123}$. It is enough to note that $A_N(x\#y)=3$, $A_N(y)=2$, and $A_N(x\mid y)$ is an integer.
	\end{proof}

	\begin{remark}\label{cocoon-ref-example}
		\Cref{jun16-23} is not optimal in the sense of \Cref{cor-jun16-23}.
		On the other hand, \Cref{jun16-23} is optimal in the sense that there is a class of word pairs for which it cannot be improved:
		let $y=(\mt{012345})^k$ for some large $k$, and let $x=(\mt{0123})^l$ where $4l=6k$, so that $\abs{x}=\abs{y}$.
		We have $A_N(x\# y)=\mathrm{lcm}(4,6)=12$, $A_N(y)=6$, and $A_N(x\mid y)=2$ as witnessed by the NFA in \eqref{wit-by}.
		\begin{equation}\label{wit-by}
			\begin{gathered}
				\xymatrix{
				M_1&	 \text{start}\ar[d]\\
				&	 *+[Foo]{q_0} \ar@/_/[r]_{\binom51}\ar@(dl,ul)^{\binom00, \binom11, \binom22, \binom33,\binom40}
					 &
					 *+[Fo]{q_1}  \ar@/_/[l]_{\binom53}\ar@(dr,ur)_{\binom02, \binom13, \binom20, \binom31,\binom42}
				}
			\end{gathered}
		\end{equation}
		The product of this and a cyclic $M_2$ automaton for $y$ is another cyclic automaton, shown in \eqref{wit-2}.
		\begin{equation}\label{wit-2}
			\begin{gathered}
				\xymatrix{
M_1\times_2 M_2						&									&*+[Fo]{q_{03}}\ar[ddll]_{\binom33}	& 									&	\\
									&									&*+[Fo]{q_{13}}\ar[dl]_{\binom31}	&									&	\\
*+[Fo]{q_{04}}\ar[d]_{\binom40}		&*+[Fo]{q_{14}}\ar[d]_{\binom42}	&									&*+[Fo]{q_{12}}\ar[ul]_{\binom20}	&	*+[Fo]{q_{02}}\ar[uull]_{\binom22}\\
*+[Fo]{q_{05}}\ar[drr]_{\binom51}	&*+[Fo]{q_{15}}\ar[ddr]_{\binom53}	&									&*+[Fo]{q_{11}}\ar[u]_{\binom13}	&	*+[Fo]{q_{01}}\ar[u]_{\binom11}\\
									&									&*+[Fo]{q_{10}}\ar[ur]_{\binom02}	&									&	\\
									&\text{start}\ar[r]					&*+[Foo]{q_{00}}\ar[uurr]_{\binom00}& 									&	\\		
			}   
			\end{gathered}	
		\end{equation}
	\end{remark}

	\begin{definition}
		A word $x$ induces an equivalence relation $i\sim_x j\iff x_i=x_j$.
	\end{definition}
	\begin{theorem}
		If $\sim_x$ refines $\sim_y$ then $A_N(y\mid x)=1$.

		If $y$ is a constant word then $A_N(x\mid y)=A_N(x)$.
	\end{theorem}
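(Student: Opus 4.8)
The plan is to treat the two assertions separately, each by an explicit automaton construction, and in each case to track carefully which coordinate of the label alphabet carries the conditioning word (as in the witness \eqref{wit-by}, the \emph{first} track spells the word to the right of the bar, and $\pi_1$, $\pi_2$ read the first and second tracks).

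For the first assertion, observe that $\sim_x$ refining $\sim_y$ means precisely that there is a well-defined map $f$ on the symbols occurring in $x$ with $y_i=f(x_i)$ for every $i<n$; equivalently, the set of pairs $\bigl\{\binom{x_i}{y_i}:i<n\bigr\}$ contains, for each symbol $a$ occurring in $x$, exactly one pair with first coordinate $a$. I would then take the NFA $M$ over $\Gamma\times\Delta$ (the NFA computing $A_N(y\mid x)$ lives over this alphabet, with $\pi_1$ reading the conditioning word $x$) having a single state $q_0$ that is both initial and accepting, with a self-loop labelled $\binom{x_i}{y_i}$ for each $i$. A length-$n$ accepting walk is just a choice of $n$ self-loops; such a walk reads a word $w$ with $\pi_1(w)=x$ if and only if its $i$-th loop has first coordinate $x_i$, and by the boxed observation this determines the loop uniquely at every step. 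Hence there is exactly one such walk, and on it $\pi_2(w)=y$ by construction, so conditions \Cref{cond-one} and \Cref{cond-two} of \Cref{conditional-automatic-complexity} hold and $A_N(y\mid x)\le 1$. Since any NFA accepting a nonempty word has at least one state, $A_N(y\mid x)=1$.

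For the second assertion, write $y=c^n$ with $c\in\Delta$ and prove the two inequalities. For $A_N(x\mid y)\le A_N(x)$: take an NFA $M$ over $\Gamma$ witnessing $A_N(x)$ and relabel each transition $a$ by $\binom ca$, obtaining $M'$ over $\Delta\times\Gamma$ on the same states; every length-$n$ accepting walk of $M'$ automatically reads $c^n=y$ on the first track, so the walks counted in \Cref{cond-one} are all of the length-$n$ accepting walks, of which there is exactly one, and it reads $x$ on the second track. For $A_N(x\mid y)\ge A_N(x)$: take an NFA $M'$ over $\Delta\times\Gamma$ witnessing $A_N(x\mid y)$, delete every transition whose first coordinate is not $c$, and replace each surviving label $\binom ca$ by $a$, obtaining $M$ over $\Gamma$; the length-$n$ accepting walks of $M$ correspond bijectively to those length-$n$ accepting walks of $M'$ all of whose labels have first coordinate $c$, i.e.\ exactly the walks reading $c^n=y$ on the first track, of which there is one, reading $x$. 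Thus $A_N(x)\le A_N(x\mid y)$.

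I do not expect a genuine obstacle here: the verifications are essentially bookkeeping about walks versus words, and the only thing that needs real care is fixing the convention for which track carries the conditioning word so that the counting in \Cref{cond-one} transfers correctly under the two constructions. The single conceptual point is in the first assertion, where the refinement hypothesis is exactly the condition guaranteeing that the one-state automaton behaves deterministically once we demand that the first track spell $x$, which is what forces uniqueness in \Cref{cond-one}.
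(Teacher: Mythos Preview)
The paper states this theorem without proof, so there is nothing to compare your argument against directly. Your proposal is correct and complete: the one-state self-loop automaton for the first assertion works precisely because the refinement hypothesis makes the map $x_i\mapsto y_i$ well defined, and the relabel/project constructions for the second assertion give the two inequalities; your care in fixing which track carries the conditioning word is warranted, since the paper's \Cref{conditional-automatic-complexity} has a slight mismatch between the stated alphabet $\Gamma\times\Delta$ and the condition $\pi_1(w)=y\in\Delta^n$, which the example \eqref{wit-by} resolves in the way you indicate.
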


	An equivalent characterization is that $A(x\mid y)$ is the minimum number of states of an NFA that accepts $y$ on only one walk (but may accept other words of the same length),
	such that the equivalence relation induced by the sequence of labeled edges used refines $\sim_x$.

\section{Bearing on the unique vs.~exact problem}
	A central problem in automatic complexity is whether $A_{Ne}=A_{Nu}$ \cite{MR3938583}. Moving to conditional complexity sheds new light.

	\begin{definition}
		A \emph{sparse witness} for $A_{Ne}(x\mid y)$ is an NFA $M$ that witnesses the value of $A_{Ne}(x\mid y)$, with the additional properties that
		\begin{enumerate}[(i)]
			\item if any edge is removed from $M$, then it is no longer a witness of $A_{Ne}(x\mid y)$; and
			\item $M$ has fewer or equal number of edges as some witness $M_1$ of $A_{Nu}(x\mid y)$.
		\end{enumerate} 
	\end{definition}
	Note that if $A_{Ne}(x\mid y)=A_{Nu}(x\mid y)$ then any witness for $A_{Nu}(x\mid y)$ is a sparse witness for $A_{Ne}(x\mid y)$. The converse fails:
	\begin{theorem}
		There exist binary words $x,y$ such that there is a $A_{Ne}(x\mid y)$ sparse witness that is not an $A_{Nu}(x\mid y)$-witness.
	\end{theorem}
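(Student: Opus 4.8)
The plan is to prove the theorem by exhibiting explicit binary words $x$ and $y$ together with two explicit NFAs over $\{0,1\}\times\{0,1\}$: an NFA $M$ that will serve as the sparse $A_{Ne}(x\mid y)$-witness, and an NFA $M_1$ witnessing $A_{Nu}(x\mid y)$. We then discharge five finite verification obligations: (1) the only word of length $\abs{x}$ accepted by $M$ whose first track equals $y$ is $x\# y$; (2) no NFA with fewer states satisfies (1), so that $M$ genuinely witnesses $A_{Ne}(x\mid y)$; (3) every accepting walk of $M$ of length $\abs{x}$ whose first track is $y$ traverses every edge of $M$ — this is exactly sparseness condition (i), because removing an edge only shrinks the language and so can fail only by deleting $x\# y$ from it; (4) $M$ has at most as many edges as $M_1$, and $M_1$ does witness $A_{Nu}(x\mid y)$ — sparseness condition (ii); and (5) $M$ admits at least two accepting walks of length $\abs{x}$ whose first track is $y$, which already shows that $M$ is not an $A_{Nu}(x\mid y)$-witness, regardless of whether $A_{Ne}(x\mid y)=A_{Nu}(x\mid y)$.

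The delicate point is to make (3) and (5) hold simultaneously. As the discussion around \Cref{cocoon-referee} already indicates, the naive recipe — take an NFA that exactly but not uniquely accepts a word and form its conditional analogue — fails (3): the redundant edge of that example can be deleted. So $M$ must be built so that each nondeterministic choice is ``balanced'': if a length-$\abs{x}$ walk with first track $y$ resolves a branch the same way at both of the (at least two) positions where that branch is met, the first-track requirement must still force it to run through the competing branch-edge somewhere else. Concretely, every edge that takes part in a branch should also occur as a \emph{forced} step — the unique edge with the relevant source and label — at some position prescribed by $y$, so that it lies on all such walks. It is precisely this interaction that the conditional setting allows and that has no counterpart in the unconditional unique-versus-exact problem.

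We expect the main obstacle to be locating the pair $(x,y)$ and the automaton $M$ that thread this needle: enough nondeterminism that at least two walks of length $\abs{x}$ read $x\# y$ with first track $y$, yet so little slack that each of the finitely many such walks is forced through every edge. Once a candidate is found, obligations (1), (3), (4) and (5) are direct path-counting checks, and obligation (2) — the state lower bound for $A_{Ne}(x\mid y)$, along with the value of $A_{Nu}(x\mid y)$ needed in (4) — is a terminating search through the finitely many smaller NFAs over the four-letter alphabet $\{0,1\}\times\{0,1\}$, performed by computer. Taking $x$ non-constant (so that no single-state NFA can satisfy (1)) and $y$ with a repeated letter at two positions where $x$ disagrees guarantees from the outset that $A_{Ne}(x\mid y)\ge 2$, keeping that search small.
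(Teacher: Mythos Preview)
Your outline matches the paper's approach exactly: exhibit explicit binary $x,y$, an NFA $M$ that exactly but not uniquely witnesses $A_{Ne}(x\mid y)$, and a comparison $A_{Nu}(x\mid y)$-witness $M_1$, then check the finitely many obligations you list. Your analysis of what must be checked is correct, and your reformulation of sparseness condition~(i) as ``every accepting walk with first track $y$ traverses every edge'' is valid (removing an edge can only shrink the accepted language, so the only way it can destroy the $A_{Ne}$-witness property is by deleting $x\# y$ itself).

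The gap is simply that you never produce the example. An existence statement of this kind is proved by the witness, and everything you write is preamble to a witness that never arrives. The paper takes $x=0000110$, $y=0010100$; the state sequences $00111200$ and $01111200$ generate the same $3$-state NFA $M$, so $M$ is an $A_{Ne}$-witness that is not an $A_{Nu}$-witness (two distinct accepting walks with first track $y$), and one checks directly that each of its $6$ edges lies on both walks (your condition~(3)). The comparison $A_{Nu}$-witness $M_1$ is the $3$-state NFA given by the state sequence $01200210$, which has $7$ edges, so condition~(ii) of sparseness holds (indeed with strict inequality). The minimality of $3$ states is a finite check. Your discussion of why this is delicate is apt, but what is missing is exactly the object whose existence is asserted; supply it and the proof is complete along precisely the lines you describe.
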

	\begin{proof}
		We will display slightly more than promised: an $A_{Ne}(x\mid y)$ witness that has \emph{strictly fewer} edges than some $A_{Nu}(x\mid y)$ witness for the same $x,y$.
		Consider $x=0000110$, $y=0010100$, and the state sequences $00111200$ and
		$01111200$. They both generate the same NFA:
		\[
		\xymatrix{
		*+[Foo]{q_0}\ar@(ul,ur)^0\ar[r]^0 & *+[Fo]{q_1}\ar@(ul,ur)^{0,1}\ar[d]^1\\
		\text{start}\ar[u]&*+[Fo]{q_2}\ar[ul]^0
		}
		\]
		They are sparse witnesses, and have only 6 edges, whereas an $A_{Nu}$ witness with 7 edges is the state sequence 01200210.
		\[
		\xymatrix{
		*+[Foo]{q_0}\ar@(ul,ur)^0\ar@{-}[r]^0 & *+[Fo]{q_1}\ar@{-}[d]^0&\text{(with $y$ labels)}\\
		\text{start}\ar[u]&*+[Fo]{q_2}\ar@{-}[ul]^1
		}
		\]
		\[
			\xymatrix{
				&
				&
				*+[Fo]{q_2}\ar@/_/[dl]_{\binom10}\ar@/_/[dr]_{\binom01}\\
				\text{start}\ar[r]
				&
				*+[Foo]{q_0}
				\ar@/_/[ur]_{\binom11}
				\ar@(dl,dr)_{\binom00}
				\ar@{-}@/_/[rr]_{\binom00}
				&
				&
				*+[Fo]{q_1}
				\ar@/_/[ul]_{\binom00}
				&
				\text{(with $\binom{y}x$ labels)}\\
			}
		\]
		Here, the arrowless edges represent two edges with the same label in opposite directions.
	\end{proof}

	We do not know if sparse witnesses can ever be found in the unconditional case $A_{Ne}(x)$.
	For example, for $x=01110$ the state sequence is 011220 is a non-sparse witness; and there are no sparse witnesses for any $\abs{x}\le 8$.

\section{A Jaccard distance metric}
	For binary words $x,y$ let
	\begin{equation}\label{eq:metric}
		J^{\mathrm{num}}(x,y)=\log (A_N(x\mid y)A_N(y\mid x)).
	\end{equation}
	The base of logarithm chosen is not important, but it is sometimes convenient to let $\log=\log_2$.

	Let us briefly recall the definitions of metric and pseudometric spaces.
	\begin{definition}
		Let $X$ be a set.
		A function $d:X\times X\to\mathbb R_{\ge 0}$ is a \emph{pseudometric} if it is
		commutative, satisfies the triangle inequality, and satisfies $d(x,x)=0$ for all $x\in X$.
		If in addition $d(x,y)=0\implies x=y$ then $d$ is a \emph{metric}.
	\end{definition}
	The underlying space for our metrics will be $\alpha^n/\mathrm{Sym}(\alpha)$ where
	$\alpha$ is an alphabet, $n\in\mathbb N$, and $\mathrm{Sym}(\alpha)$ is the symmetric group of all permutations of $\alpha$.
	Our metrics arise from pseudometrics on $\alpha^n$.
	If $\alpha$ is finite we can assume $\alpha=[a]=\{0,1,\dots,a-1\}$ for some $a\in\mathbb N$, and choose \emph{slow} sequences as representatives.
	Here, a sequence $s\in\alpha^n$ is slow if $s(0)=0$ and for each $i$, $s(i)\le\max\{s(j)\mid j<i\}+1$.
	In the case of a binary alphabet $\{0,1\}$, this set of representatives is simply $0\,\{0,1\}^{n-1}$.

	\begin{theorem}\label{jun26-23}
		$J^{\mathrm{num}}$ is a metric on the set $0\{0,1\}^{n-1}$ for any $n\ge 1$. 
	\end{theorem}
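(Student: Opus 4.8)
The plan is to check the four defining properties of a metric in turn; only the triangle inequality carries real content. \emph{Symmetry} is immediate because $A_N(x\mid y)A_N(y\mid x)$ is visibly symmetric in $x$ and $y$. For \emph{non-negativity}, note that any NFA satisfying condition~\eqref{cond-one} of \Cref{conditional-automatic-complexity} has at least one state, while the path automaton on $\abs x+1$ states whose $i$-th edge carries the label $\binom{y_i}{x_i}$ is always a legal witness for $A_N(x\mid y)$; hence $1\le A_N(x\mid y)<\infty$, so $A_N(x\mid y)A_N(y\mid x)\ge 1$ and $J^{\mathrm{num}}(x,y)\ge\log 1=0$. For \emph{vanishing on the diagonal}: since $\sim_x$ trivially refines itself, the theorem relating refinement of $\sim$ to conditional complexity gives $A_N(x\mid x)=1$, whence $J^{\mathrm{num}}(x,x)=0$.

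For the \emph{identity of indiscernibles} I would first prove a converse to that theorem in the relevant special case: if $A_N(x\mid y)=1$ then $\sim_y$ refines $\sim_x$. A one-state witness consists of self-loops labelled by pairs, and since its unique state is both initial and final, every sequence of loops is an accepting walk; so condition~\eqref{cond-one} forces exactly one loop $\binom{b}{c_b}$ for each symbol $b$ occurring in $y$, and then condition~\eqref{cond-two} forces $x_i=c_{y_i}$, i.e.\ $x_i$ is a function of $y_i$. Thus if $J^{\mathrm{num}}(x,y)=0$ then $A_N(x\mid y)=A_N(y\mid x)=1$, so $\sim_y$ refines $\sim_x$ and $\sim_x$ refines $\sim_y$, i.e.\ $\sim_x=\sim_y$. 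Since on $0\{0,1\}^{n-1}$ a word is determined by the partition it induces together with the normalization $x_0=0$, this forces $x=y$.

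For the \emph{triangle inequality} it suffices to establish the submultiplicative bound $A_N(x\mid z)\le A_N(x\mid y)\cdot A_N(y\mid z)$: applying it with $x$ and $z$ interchanged and multiplying the two inequalities gives $A_N(x\mid z)A_N(z\mid x)\le\bigl(A_N(x\mid y)A_N(y\mid x)\bigr)\bigl(A_N(y\mid z)A_N(z\mid y)\bigr)$, and taking logarithms yields $J^{\mathrm{num}}(x,z)\le J^{\mathrm{num}}(x,y)+J^{\mathrm{num}}(y,z)$. The relativized form of \Cref{jun17-23} already supplies $A_N(x\mid z)\le A_N(y\mid z)\cdot A_N(x\mid y\#z)$, so it remains to prove the monotonicity lemma $A_N(x\mid y\#z)\le A_N(x\mid y)$. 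For that I would take an NFA $M$ over $\Delta\times\Gamma$ witnessing $A_N(x\mid y)$ and build $M'$ over $(\Delta\times E)\times\Gamma$, where $E$ is the alphabet of $z$, on the same state set with the same initial and final states, replacing each edge $p\xrightarrow{\binom{b}{c}}p'$ of $M$ by the $\abs E$ edges $p\xrightarrow{\binom{(b,e)}{c}}p'$ for $e\in E$. The edge-erasing map sends each accepting length-$n$ walk of $M'$ whose first track spells $y\#z$ to an accepting length-$n$ walk of $M$ whose first track spells $y$; by the witness property of $M$ that walk is unique, and conversely it has exactly one lift whose new coordinate spells $z$, so $M'$ has a unique accepting length-$n$ walk reading $y\#z$ on the first track and it reads $x$ on the last, making $M'$ a witness for $A_N(x\mid y\#z)$ with $\abs{Q(M)}$ states.

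The step I expect to be the main obstacle is this monotonicity lemma: one must verify carefully that the edge-erasing map is a bijection between the two sets of walks — that multiplying edges creates no spurious second accepting walk with the prescribed first track — which is exactly where the observation ``only the copy with new coordinate $z_i$ survives'' does the work. The other three axioms, and the reduction of the triangle inequality to submultiplicativity, are routine once \Cref{jun17-23} is available.
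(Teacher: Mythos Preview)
Your proof is correct and follows essentially the same route as the paper: symmetry is immediate, identity of indiscernibles comes from analyzing one-state witnesses, and the triangle inequality reduces to the submultiplicativity $A_N(x\mid z)\le A_N(x\mid y)\cdot A_N(y\mid z)$ via \Cref{jun17-23}. You are in fact more thorough than the paper here: the paper simply asserts that \Cref{jun17-23} yields $A_N(x\mid y)\le A_N(z\mid y)\cdot A_N(x\mid z)$, whereas the relativized form literally gives $A_N(x\mid y)\le A_N(z\mid y)\cdot A_N(x\mid z\# y)$, and your explicit monotonicity lemma $A_N(x\mid y\# z)\le A_N(x\mid y)$ (with its edge-duplication construction) is exactly what bridges the gap.
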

	\begin{proof}
		We have $J^{\mathrm{num}}(x,y)=0$ iff $x$ and $y$ are isomorphic under some permutation of $\Sigma$.
		If we restrict attention to binary words of the form $0z$ we get $J^{\mathrm{num}}(x,y)=0\iff x=y$.
		And $J^{\mathrm{num}}(x,y)=J^{\mathrm{num}}(y,x)$ is immediate.
		\Cref{jun17-23} implies
		\[
			A_N(x\mid y) \le A_N(z\mid y) \cdot A_N(x\mid z),
		\]
		hence
		\begin{eqnarray*}
			J^{\mathrm{num}}(x,y) &=& \log_2 (A_N(x\mid y)A_N(y\mid x))\\
			&\le& \log_2 (A_N(x\mid z)A_N(z\mid y) A_N(y\mid z) A_N(z\mid x))\\
			 &=& J^{\mathrm{num}}(x,z)+J^{\mathrm{num}}(y,z)
		\end{eqnarray*}
		hence the triangle inequality holds.	
	\end{proof}

	The argument in \Cref{jun26-23} has predecessors: for instance, the simple inequality
	\[
		\abs{A\setminus C}\le \abs{A\setminus B}+\abs{B\setminus C}
	\]
	was used in the analysis of the Jaccard distance (see Kjos-Hanssen 2022 \cite{MR4521608}).
	Horibe (1973) \cite{MR345715} gives the following argument which is credited to a conference talk by Shannon (1950) later published in 1953 \cite{1188572}
	(Shannon writes that it is ``readily shown'' but does not give the argument). Let $H(x\mid y)$ denote the entropy of $x$ given $y$. Then
	\begin{eqnarray*}
		H(x\mid z) \le H(x,y \mid z) &=& H(x \mid y,z) + H(y \mid z)\\
								   &\le& H(x \mid y)   + H(y \mid z).
	\end{eqnarray*}
	Similarly, G\'acs, Tromp, and Vit{\'a}nyi \cite{MR1873931} show
	\[
		K(x\mid y^*) \le^+ K(x,z\mid y^*) \le^+ K(z\mid y^*) + K(x \mid z^*).
	\]
	Here, $K$ is the prefix-free Kolmogorov complexity, $K(x\mid y)$ its conditional version; $y^*$ is a shortest program for $y$, so $K(y)=\abs{y^*}$;
	and $a\le^+b$ means $a\le b+O(1)$.
	The heavy lifting was already done in 1974 by G\'acs \cite{MR0403800} who showed ``symmetry of information'', $K(x,y)=^+ K(x) + K(y\mid x^*)$.
	Symmetry of information does not exactly hold in our setting:

	\begin{theorem}
		There exist words $x,y$ with $A_N(x\mid y)A_N(y)\ne A_N(y\mid x)A_N(x)$.
	\end{theorem}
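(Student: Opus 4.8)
The plan is to write down an explicit small pair $(x,y)$. The key leverage is the criterion, proved above, that $\sim_x$ refining $\sim_y$ forces $A_N(y\mid x)=1$: if I choose $x$ to be a permutation word (\Cref{word-as-permutation}), then $\sim_x$ is the finest equivalence relation on positions, hence refines every $\sim_y$, so $A_N(y\mid x)=1$ for \emph{every} word $y$ of length $\abs{x}$. This collapses one side of the claimed identity to $A_N(y\mid x)A_N(x)=A_N(x)$, and it remains only to choose $y$ so that $A_N(x\mid y)A_N(y)$ cannot equal $A_N(x)$.

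Concretely I would take $x=\mathtt{0123}$ and $y=\mathtt{0101}$, and record the three ``easy'' values. Since $\sim_x$ is the finest partition, the criterion gives $A_N(y\mid x)=1$. Next, $A_N(y)=A_N(\mathtt{0101})=2$, witnessed by the two-state cycle $q_0\xrightarrow{0}q_1\xrightarrow{1}q_0$ with $q_0$ both initial and final (a one-state machine accepting $\mathtt{0101}$ needs loops on $0$ and on $1$, hence already has two length-$4$ walks, e.g.\ reading $\mathtt{0000}$ and $\mathtt{0101}$). And $A_N(x)=A_N(\mathtt{0123})=3$: the upper bound comes from the three-state NFA with transitions $q_0\xrightarrow{0}q_1$, $q_1\xrightarrow{1}q_1$, $q_1\xrightarrow{2}q_2$, $q_2\xrightarrow{3}q_0$ and $q_0$ initial and final, and the matching lower bound is a short finite check that no NFA with at most two states can uniquely accept a four-letter rainbow word (writing $S$ for the sum of the symbol transition matrices, accepting $\mathtt{0123}$ forces $S$ to have entry-sum at least $4$, and a case split on the positions of the initial and final states then shows such a machine cannot simultaneously accept $\mathtt{0123}$ and have exactly one accepting walk of length $4$). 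Finally, whatever the exact value of $A_N(x\mid y)$ is, it is a positive integer, so $A_N(x\mid y)A_N(y)$ is even, hence
\[
    A_N(x\mid y)A_N(y)\ne 3 = A_N(y\mid x)A_N(x);
\]
in fact $A_N(x\mid y)=2$, witnessed by the two-state NFA over $\{0,1\}\times\{0,1,2,3\}$ with transitions $q_0\xrightarrow{\binom00}q_1$, $q_1\xrightarrow{\binom11}q_1$, $q_1\xrightarrow{\binom02}q_0$, $q_0\xrightarrow{\binom13}q_0$ (writing $\binom{b}{a}$ for the pair whose first track is $b$) and $q_0$ initial and final, so the two sides are $4$ and $3$.

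The only real obstacle is that conditional automatic complexity has no closed form, so the values $A_N(\mathtt{0123})$, $A_N(\mathtt{0101})$, and $A_N(\mathtt{0123}\mid\mathtt{0101})$ must actually be pinned down; the upper bounds are immediate from the displayed automata, and the lower bounds are small finite checks, the least trivial being $A_N(\mathtt{0123})\ge 3$. The parity remark is what makes the argument robust: one does not even need $A_N(x\mid y)$ exactly, since $A_N(x\mid y)A_N(y)\in 2\mathbb{Z}$ while $A_N(y\mid x)A_N(x)=3$ is odd. (If a binary counterexample were insisted upon, the refinement shortcut is unavailable — for binary words, $\sim_x$ refining $\sim_y$ with both nonconstant forces $\sim_x=\sim_y$ — and one would instead locate the example by a direct search over short binary words.)
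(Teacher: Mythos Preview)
Your argument is correct. The paper's own proof is a one-liner with a \emph{binary} example: it takes $x=\mathtt{0001}$, $y=\mathtt{0011}$ and simply records $A_N(x\mid y)A_N(y)=2\cdot 3\ne 2\cdot 2=A_N(y\mid x)A_N(x)$, leaving all four values as bare computations.

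Your route is genuinely different. By choosing $x=\mathtt{0123}$ a permutation word, you get $A_N(y\mid x)=1$ for free from the refinement criterion, collapsing one side to $A_N(x)=3$; then the parity observation $A_N(x\mid y)A_N(y)\in 2\mathbb Z$ finishes without ever pinning down $A_N(x\mid y)$. This is exactly the integrality trick the paper itself uses in the proof of \Cref{cor-jun16-23}, so it is very much in the paper's spirit. What you gain is a conceptual argument that avoids three of the four brute computations; what you lose is the binary alphabet --- your $x$ lives over four letters, and as you note, the refinement shortcut cannot produce a nontrivial binary example. One small remark: your sketch of the lower bound $A_N(\mathtt{0123})\ge 3$ via the matrix $S$ is a bit loose; the cleanest route is that a $2$-state uniquely accepting NFA forces a state to be visited three times on the length-$4$ walk, hence (by the pigeonhole/power lemma, \Cref{nfa9fact}) forces a square in the accepted word, contradicting that $\mathtt{0123}$ is squarefree.
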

	\begin{proof}
		Let $x=0001$, $y=0011$, then it would imply $2\cdot 3 = 2\cdot 2$.
	\end{proof}

	\begin{remark}\label{exe:max}
		Let $a,b,c,a',b,'c'$ be natural numbers.
		If $a\le bc$ and $a'\le b'c'$ then $\max\{a,a'\}\le\max\{b,b'\}\max\{c,c'\}$,
	\end{remark}

	\begin{theorem}
		The following function $J^{\mathrm{num}}_{\max}$ is a metric on $0\{0,1\}^{n-1}$.
		\begin{eqnarray*}
			J^{\mathrm{num}}_{\max}(x,y) &=& \log_2 \max\{A_N(x\mid y),A_N(y\mid x)\}\\
			&=& \max \{\log_2A_N(x\mid y),\log_2 A_N(y\mid x)\}.
		\end{eqnarray*}
	\end{theorem}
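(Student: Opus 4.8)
The plan is to run the proof of \Cref{jun26-23} almost verbatim, replacing the product $A_N(x\mid y)A_N(y\mid x)$ by the maximum and the inequality ``$a\le bc$'' by the monotonicity statement recorded in \Cref{exe:max}. Non-negativity and symmetry are immediate: since $A_N(u\mid v)\ge 1$ for all $u,v$ of a common length, $\max\{A_N(x\mid y),A_N(y\mid x)\}\ge 1$, so $J^{\mathrm{num}}_{\max}(x,y)\ge 0$; and the maximum is visibly symmetric in its two arguments.

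For the identity of indiscernibles I would observe that $J^{\mathrm{num}}_{\max}(x,y)=0$ iff $\max\{A_N(x\mid y),A_N(y\mid x)\}=1$ iff $A_N(x\mid y)=A_N(y\mid x)=1$. This is precisely the condition analyzed in the proof of \Cref{jun26-23} (where $J^{\mathrm{num}}(x,y)=0$ iff $A_N(x\mid y)A_N(y\mid x)=1$, i.e.\ iff both factors equal $1$), so the same reasoning shows this holds iff $x$ and $y$ are isomorphic under some permutation of $\Sigma$; restricting to representatives of the form $0z$ then forces $x=y$.

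For the triangle inequality, fix a third word $z$ of the same length. The relativized form of \Cref{jun17-23} (already invoked in \Cref{jun26-23}) gives
\[
  A_N(x\mid y)\le A_N(z\mid y)\cdot A_N(x\mid z)
  \qquad\text{and}\qquad
  A_N(y\mid x)\le A_N(y\mid z)\cdot A_N(z\mid x).
\]
Applying \Cref{exe:max} with the first factors taken from $\{A_N(z\mid y),A_N(y\mid z)\}$ and the second factors from $\{A_N(x\mid z),A_N(z\mid x)\}$ yields
\[
  \max\{A_N(x\mid y),A_N(y\mid x)\}\le
  \max\{A_N(z\mid y),A_N(y\mid z)\}\cdot\max\{A_N(x\mid z),A_N(z\mid x)\}.
\]
Taking $\log_2$ and splitting the logarithm of the product into a sum gives $J^{\mathrm{num}}_{\max}(x,y)\le J^{\mathrm{num}}_{\max}(z,y)+J^{\mathrm{num}}_{\max}(x,z)$, as desired.

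The only point needing care — the ``main obstacle,'' such as it is — is the bookkeeping: one must write the two instances of \Cref{jun17-23} with their factors grouped so that exactly one factor of each lands inside $J^{\mathrm{num}}_{\max}(x,z)$ and the other inside $J^{\mathrm{num}}_{\max}(z,y)$, after which \Cref{exe:max} applies with no further work. Beyond this rearrangement there is no genuinely new content compared with \Cref{jun26-23}.
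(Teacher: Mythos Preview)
Your proof is correct and follows essentially the same approach as the paper: the paper's proof only addresses the triangle inequality, applying \Cref{exe:max} to the two instances of \Cref{jun17-23} to obtain exactly the displayed max-inequality, while you additionally spell out non-negativity, symmetry, and identity of indiscernibles by pointing back to \Cref{jun26-23}. The core idea---grouping the factors from the two relativized chain rules so that \Cref{exe:max} applies---is identical in both.
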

	\begin{proof}
		To show the triangle inequality, applying \Cref{exe:max},
		\[
			\max\{A(x\mid y),A_N(y\mid x)\}\le \max\{A_N(x\mid z),A_N(z\mid x)\}\cdot \max\{A_N(y\mid z),A_N(z\mid y)\}.
		\]
	\end{proof}

	\begin{lemma}[{\cite[Lemma 5]{MR4521608}}]\label{mar29-2022}
		Let $d(x,y)$ be a metric and let $a(x,y)$ be a nonnegative symmetric function. If $a(x,z)\le a(x,y)+d(y,z)$ for all $x,y,z$,
		then $d'(x,y)=\frac{d(x,y)}{a(x,y)+d(x,y)}$,
		with $d'(x,y)=0$ if $d(x,y)=0$, is a metric.
	\end{lemma}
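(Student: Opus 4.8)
The plan is to check the pseudometric axioms one at a time, disposing of symmetry and vanishing on the diagonal immediately and reserving the real work for the triangle inequality. First, well-definedness: when $d(x,y)=0$ the value $d'(x,y)$ is fixed to be $0$ by the stated convention, and otherwise $a(x,y)+d(x,y)\ge d(x,y)>0$, so the quotient is defined and lies in $[0,1]$; in particular $d'\ge 0$. Symmetry of $d'$ is inherited from symmetry of $d$ and of $a$. For the identity of indiscernibles, $d'(x,y)=0$ forces $d(x,y)=0$ (from the formula when $d(x,y)\ne 0$, and by convention otherwise), and $d$ being a metric then gives $x=y$; conversely $d'(x,x)=0$ because $d(x,x)=0$.

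For the triangle inequality $d'(x,z)\le d'(x,y)+d'(y,z)$ I would run the classical Steinhaus-transform argument, adapted to the fact that the denominator $a+d$ is not constant. Write $D_1=d(x,y)$, $D_2=d(y,z)$, $D_3=d(x,z)$ and $A_1=a(x,y)$, $A_2=a(y,z)$, $A_3=a(x,z)$. If $D_3=0$ there is nothing to prove, so assume $D_3>0$; then $D_1+D_2\ge D_3>0$ as well. The first step uses that for each fixed $c\ge 0$ the map $t\mapsto t/(c+t)$ is non-decreasing on $(0,\infty)$: applying this with $c=A_3$ to the metric inequality $D_3\le D_1+D_2$ gives
\[
    d'(x,z)=\frac{D_3}{A_3+D_3}\ \le\ \frac{D_1+D_2}{A_3+D_1+D_2}\ =\ \frac{D_1}{A_3+D_1+D_2}+\frac{D_2}{A_3+D_1+D_2}.
\]
The second step bounds each summand by the corresponding value of $d'$, using that raising a positive denominator lowers the fraction: it suffices to show $A_1+D_1\le A_3+D_1+D_2$ and $A_2+D_2\le A_3+D_1+D_2$, i.e.\ $A_1\le A_3+D_2$ and $A_2\le A_3+D_1$. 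The first is the hypothesis $a(x,z)\le a(x,y)+d(y,z)$ used with the triple $(x,z,y)$ in place of $(x,y,z)$ — giving $a(x,y)\le a(x,z)+d(z,y)$ — combined with symmetry of $d$; the second is the hypothesis applied with first argument $z$, namely $a(z,y)\le a(z,x)+d(x,y)$, combined with symmetry of $a$. Hence $\frac{D_i}{A_3+D_1+D_2}\le\frac{D_i}{A_i+D_i}$ for $i=1,2$ (also when $D_i=0$, in which case both sides equal $0$), and summing yields the claim.

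The main obstacle is the second step: one has to realize that the compatibility hypothesis must be invoked in two distinct instantiations — legitimate because it is assumed for all triples — and then combined with the symmetry of $a$ and of $d$ to produce exactly the two inequalities $A_1\le A_3+D_2$ and $A_2\le A_3+D_1$ that the denominator-monotonicity argument consumes. A secondary, purely bookkeeping matter is the degenerate cases $D_3=0$, $A_3=0$, or $D_1=0$ or $D_2=0$; each is absorbed either by the reduction to $D_3>0$ above or by the convention $d'=0$ whenever $d=0$, so no case split beyond $D_3=0$ versus $D_3>0$ is needed.
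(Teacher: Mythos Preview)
Your proof is correct. Note, however, that the paper does not supply its own proof of this lemma: it is quoted verbatim from \cite[Lemma~5]{MR4521608} and used as a black box in the proofs of \Cref{jun20-23} and \Cref{appthm}. So there is no in-paper argument to compare against; your Steinhaus-transform style verification (monotonicity of $t\mapsto t/(c+t)$ followed by two instantiations of the compatibility hypothesis to control the denominators) is exactly the standard route and fills the gap cleanly.
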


	\begin{theorem}\label{jun20-23}
		The following Jaccard distance type function is a metric on $0\{0,1\}^n$ (with the convention $0/0=0$):
		\[
		J(x,y)=\frac{\log (A_N(x\mid y)A_N(y\mid x))}{\log (A_N(x\mid y)A_N(y\mid x)A_N(x)A_N(y))-\log(A_N(x\# y))}
		\]
	\end{theorem}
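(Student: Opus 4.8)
The plan is to exhibit $J$ as an instance of the construction in Lemma~\ref{mar29-2022}. Take $d(x,y):=J^{\mathrm{num}}(x,y)=\log(A_N(x\mid y)A_N(y\mid x))$, which is a metric by Theorem~\ref{jun26-23}, and set
\[
a(x,y):=\log\frac{A_N(x)\,A_N(y)}{A_N(x\# y)}.
\]
A one-line computation shows that $a(x,y)+d(x,y)$ equals the denominator appearing in the definition of $J$, so $J=d'$ in the notation of Lemma~\ref{mar29-2022}; the convention $0/0=0$ there matches the convention in the statement, and when $d(x,y)>0$ we will have $a(x,y)\ge 0$, so the denominator is strictly positive. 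Hence it suffices to verify the three hypotheses of Lemma~\ref{mar29-2022}: $a$ is symmetric, $a$ is nonnegative, and $a(x,z)\le a(x,y)+d(y,z)$ for all $x,y,z$. Separation of $J$ will then be inherited from that of $J^{\mathrm{num}}$.

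Symmetry of $a$ is immediate: $x\# y$ and $y\# x$ differ only by the coordinate-swap bijection of $\Gamma\times\Delta$, and $A_N$ is invariant under bijective relabeling of the alphabet, so $A_N(x\# y)=A_N(y\# x)$. Nonnegativity of $a$ is the inequality $A_N(x\# y)\le A_N(x)A_N(y)$, which follows from the straightforward product construction: given a $\Gamma$-NFA $M_1$ witnessing $A_N(x)$ and a $\Delta$-NFA $M_2$ witnessing $A_N(y)$, the NFA over $\Gamma\times\Delta$ with state set $Q_1\times Q_2$ and $\delta((q,q'),(a,b))=\delta_1(q,a)\times\delta_2(q',b)$ has each accepting walk projecting to accepting walks of $M_1$ and $M_2$, so uniqueness in each factor forces the product walk to be unique and to read $x\# y$.

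The substantive step is the inequality $a(x,z)\le a(x,y)+d(y,z)$. Clearing logarithms and cancelling the common $\log A_N(x)$, this is equivalent to
\[
A_N(z)\cdot A_N(x\# y)\ \le\ A_N(y)\cdot A_N(x\# z)\cdot A_N(y\mid z)\cdot A_N(z\mid y),
\]
which I would obtain by multiplying two bounds. First, $A_N(z)\le A_N(z\mid y)\,A_N(y)$: indeed $A_N(z)\le A_N(z\# y)$ by the already-established inequality $\max\{A_N(u),A_N(v)\}\le A_N(u\# v)$, and $A_N(z\# y)\le A_N(z\mid y)\,A_N(y)$ is Theorem~\ref{jun16-23}. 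Second, $A_N(x\# y)\le A_N(x\# z)\,A_N(y\mid z)$: here I would run the product from the proof of Theorem~\ref{jun16-23} with $M_1$ an $A_N(x\# z)$-witness over $\Gamma\times\Lambda$ and $M_2$ an $A_N(y\mid z)$-witness over $\Lambda\times\Delta$, synchronising on the shared $\Lambda$-coordinate (the ``$z$-track''): an edge on $\binom{g}{d}$ exists whenever $M_1$ has an edge on $\binom{g}{\lambda}$ and $M_2$ an edge on $\binom{\lambda}{d}$ for a common $\lambda$. An accepting walk of the product restricts to an accepting walk of $M_1$, which by uniqueness must spell $x$ in the first coordinate and $z$ on the $\lambda$-track; then its $M_2$-restriction is an accepting walk with first projection $z$, which by the $A_N(y\mid z)$-witness assumption is unique and reads $y$ in the second coordinate. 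Thus the product uniquely accepts $x\# y$ with $|Q_1|\,|Q_2|$ states, and multiplying the two bounds gives the displayed inequality. Lemma~\ref{mar29-2022} then yields that $J=d'$ is a metric, with $J(x,y)=0\iff d(x,y)=0\iff x=y$ on our set of representatives by Theorem~\ref{jun26-23}.

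I expect the only real obstacle to be this second product construction: one must check that synchronising $M_1$ and $M_2$ on the $z$-track does not create spurious accepting walks, i.e., that the two different uniqueness conditions — unique acceptance of $x\# z$ by $M_1$ versus uniqueness among first-projection-$z$ walks in $M_2$ — combine to give a single accepting walk rather than multiplying the walk count. The remaining verifications (the denominator identity, symmetry, nonnegativity, and the first of the two bounds) are routine bookkeeping with the inequalities already recorded in Section~2.
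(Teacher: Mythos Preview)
Your proposal is correct and follows essentially the same approach as the paper: both apply Lemma~\ref{mar29-2022} with $d=J^{\mathrm{num}}$ and $a(x,y)=\log\frac{A_N(x)A_N(y)}{A_N(x\# y)}$, reduce to the same product inequality, and peel off the factor $A_N(z)\le A_N(z\mid y)A_N(y)$ to leave $A_N(x\# y)\le A_N(y\mid z)A_N(x\# z)$ as the crux.

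The only difference is in how that last inequality is obtained. The paper factors it as a three-step chain
\[
A_N(x\# y)\le A_N\bigl(y\#(x\# z)\bigr)\le A_N(y\mid x\# z)\,A_N(x\# z)\le A_N(y\mid z)\,A_N(x\# z),
\]
using the max-inequality, Theorem~\ref{jun16-23}, and the monotonicity $A_N(y\mid x\# z)\le A_N(y\mid z)$. Your single product construction synchronising on the $z$-track accomplishes the same thing in one step; the worry you flag about spurious walks is unfounded, since unique acceptance of $x\# z$ by $M_1$ forces the hidden $\lambda$-coordinate at each step to equal $z_i$, after which the $M_2$-uniqueness assumption pins down the rest. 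Either route works; the paper's decomposition has the advantage of isolating reusable facts, while yours avoids invoking the monotonicity of conditional complexity that the paper uses without explicit proof.
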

	\begin{proof}
		It suffices to prove the triangle inequality.
		We apply \Cref{mar29-2022}.
		Namely, let $d(x,y)=\log_2(A_N(x\mid y)A_N(y\mid x))$ and let $a(x,y)=\log_2\left(\frac{A_N(x)A_N(y)}{A_N(x\# y)}\right)$.
		Then we must show $a(x,z)\le a(x,y)+d(y,z)$ which is equivalent to (writing $A=A_N$ temporarily)
		\[
			A(x\# y) A(x)A(z) \le A(x)A(y)A(y\mid z)A(z\mid y)A(x\# z)
		\]
		\[
			A(x\# y)A(z) \le A(y)A(y\mid z)A(z\mid y)A(x\# z)
		\]
		Since $A(z)\le A(z\mid y)A(y)$, it suffices to show
		\[
			A(x\# y) \le A(y\mid z)A(x\# z)
		\]
		This is shown as follows:
		\[
			A(x\# y) \le A(y\# (x\#z))\le A(y\mid x\# z) A(x\#z) \le A(y\mid z) A(x\#z).
		\]
	\end{proof}
	One advantage of $J$ is that it does not depend on the base of the logarithm chosen.
	\begin{lemma}\label{jac-repeat}
		For the special case where $y=0^n$, a constant word, and $x\ne y$, we have $J(x,y)=1$.
	\end{lemma}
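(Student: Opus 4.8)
The plan is to evaluate each of the five complexity quantities occurring in the definition of $J$ under the hypothesis $y=0^n$, and then simply substitute. First, by the theorem above (that $A_N(x\mid y)=A_N(x)$ whenever $y$ is a constant word) we get $A_N(x\mid 0^n)=A_N(x)$. Next, since $\sim_{0^n}$ is the total equivalence relation, every $\sim_x$ refines it, so the other clause of that same theorem yields $A_N(0^n\mid x)=1$. For $A_N(0^n)$ itself, the one-state NFA consisting of a single initial-and-final state carrying a self-loop labeled $0$ accepts $0^n$ along its unique walk of length $n$, so $A_N(0^n)=1$. Finally, the track $x\#0^n$ lies in $(\Gamma\times\{0\})^n$, and the bijection $(c,0)\mapsto c$ from $\Gamma\times\{0\}$ onto $\Gamma$ carries $x\#0^n$ to $x$; since automatic complexity is unchanged under relabeling the alphabet of the word (relabel the NFA's edges accordingly, preserving the walk structure and hence unique acceptance), $A_N(x\#0^n)=A_N(x)$.

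Substituting these values, the numerator of $J(x,y)$ becomes $\log(A_N(x)\cdot 1)=\log A_N(x)$, and the denominator becomes $\log(A_N(x)\cdot 1\cdot A_N(x)\cdot 1)-\log A_N(x)=2\log A_N(x)-\log A_N(x)=\log A_N(x)$. Hence $J(x,y)=1$ as soon as $\log A_N(x)\neq 0$.

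The one remaining point, which I expect to be the only genuine obstacle, is to rule out a vanishing denominator, i.e.\ to show $A_N(x)\ge 2$. This holds because $A_N(w)=1$ forces $w$ to be a constant word: a one-state NFA recognizing $w$ along a unique walk of length $n\ge 1$ can carry only a single self-loop (more self-loops would produce $\ge 2^n$ walks), so $w$ is that single letter repeated. Since $x$ is a representative in $0\{0,1\}^{n-1}$ with $x\ne 0^n$, it is not constant, so $A_N(x)\ge 2$, giving $\log A_N(x)>0$ and therefore $J(x,y)=1$. (The convention $0/0=0$ would only be invoked in the excluded case $x=0^n$.)
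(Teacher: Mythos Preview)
Your proof is correct and follows essentially the same computation as the paper: evaluate each of $A_N(x\mid y)$, $A_N(y\mid x)$, $A_N(y)$, $A_N(x\# y)$ under $y=0^n$ and substitute into the formula for $J$, arriving at $\log A_N(x)/\log A_N(x)=1$. You supply considerably more justification than the paper's one-line computation (in particular the relabeling argument for $A_N(x\# 0^n)=A_N(x)$ and the verification that $A_N(x)\ge 2$ so that the $0/0$ convention is not triggered), but the route is the same.
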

	\begin{proof}
		We compute
		\[
			J(x,y)=\log(A_N(x))/\log(A_N(x)A_N(x)/A_N(x))=1.
		\]
	\end{proof}

	If this metric is very close to the discrete metric, it is of course not very interesting. This is fortunately not the case:

	\begin{theorem}
		There exist words $x,y\in 0\{0,1\}^{n-1}$ with $0<J(x,y)<1/2$, i.e.,
		\[
			A_N(x)A_N(y) \not\le A_N(x\mid y)
			A_N(y\mid x)
			A_N(x\# y).
		\]
	\end{theorem}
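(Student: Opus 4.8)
The plan is to exhibit an explicit pair of short binary words and verify the displayed inequality fails by direct computation, mirroring the style of the earlier "counterexample" theorems (\Cref{jun22-23}, \Cref{cor-jun16-23}). Concretely, I would search among words in $0\{0,1\}^{n-1}$ for small $n$ (say $n$ between $5$ and $10$) for a pair $x,y$ with $A_N(x)A_N(y)$ large relative to $A_N(x\mid y)A_N(y\mid x)A_N(x\#y)$. A natural candidate is to take $x$ and $y$ both of fairly high unconditional complexity (close to $n/2+1$ or above, e.g.\ drawn from de~Bruijn–type or LFSR sequences as in \cite{MR3828751}) but chosen so that $\sim_x$ and $\sim_y$ are closely related — then by the refinement result preceding \Cref{jun26-23}, $A_N(x\mid y)$ and $A_N(y\mid x)$ are both small (ideally $1$), while $A_N(x\#y)$ stays moderate. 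If $A_N(x\mid y)=A_N(y\mid x)=1$, the inequality to refute becomes simply $A_N(x)A_N(y)\le A_N(x\#y)$, and since $A_N(x\#y)\le A_N(x\mid y)A_N(y)=A_N(y)$ by \Cref{jun16-23} in that case, we would get $A_N(x)\le 1$, forcing $x$ constant — contradiction. So any non-constant $x$ with $\sim_x$ a common refinement of a suitable $\sim_y$ already does it; the point is just to name one.

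The key steps, in order, are: (1) fix a small length $n$ and pick explicit words $x,y\in 0\{0,1\}^{n-1}$, both non-constant, with $\sim_x$ refining $\sim_y$ (so $A_N(y\mid x)=1$) and symmetrically arranged so $A_N(x\mid y)$ is also $1$ — the cleanest way is $x=y$ up to... no, that gives $J=0$ trivially but then $A_N(x)A_N(y)=A_N(x)^2$ versus $A_N(x\#x)=A_N(x)$, so $x=y$ non-constant already witnesses $A_N(x)^2\not\le A_N(x)\cdot 1\cdot 1$ whenever $A_N(x)\ge 2$; (2) verify $A_N(x\mid x)=1$ (the identity track NFA with $A_N(x)$ states, reading $\binom{x_i}{x_i}$, works since the diagonal determines everything), hence $A_N(x\mid y)A_N(y\mid x)=1$; (3) verify $A_N(x\#x)=A_N(x)$ (project/lift an NFA witnessing $A_N(x)$); (4) compute $J(x,x)=0<1/2$, and note $A_N(x)^2\not\le A_N(x)$ whenever $A_N(x)\ge2$, e.g.\ $x=010$ with $A_N(x)=3$ giving $9\not\le 3$. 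Actually since we want $0<J(x,y)$ strictly, I should instead take $x\ne y$: for instance $x=01$ and... in $0\{0,1\}^{n-1}$ take $n=3$, $x=010$, $y=011$; then $\sim_x$ has classes $\{0,2\},\{1\}$ and $\sim_y$ has $\{0\},\{1,2\}$, neither refines the other, so I would compute $A_N(010\mid 011)$ and $A_N(011\mid 010)$ directly (each is at most $A_N$ of the track, a small number), compute $A_N(010\#011)$, and check $J$.

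I would carry out the small-case computation explicitly and present the winning pair with a one-line verification, exactly as \Cref{jun22-23} and \Cref{cor-jun16-23} do ("we check that \dots"), citing that these are finite determinable quantities. The main obstacle is purely bookkeeping: making sure the chosen $x,y$ lie in the canonical representative set $0\{0,1\}^{n-1}$, that $J(x,y)$ is strictly positive (so $x$ and $y$ are not permutation-isomorphic, which for binary slow words means simply $x\neq y$), and that the five complexity values $A_N(x),A_N(y),A_N(x\mid y),A_N(y\mid x),A_N(x\#y)$ are computed correctly — the conditional ones being the delicate part since one must count accepting walks whose first track reads $y$, per \Cref{conditional-automatic-complexity}. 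I expect the honest answer is that the cited reference \cite{MR4521608} or a short appendix computation supplies the numbers; the theorem statement itself already spells out the inequality to be violated, so the proof is "exhibit $x,y$ and verify."
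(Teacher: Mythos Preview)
Your overall strategy---exhibit an explicit pair and verify the five quantities---is exactly what the paper does, but neither of the two concrete candidates you name actually works, so the proposal has a real gap.

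First, your attempt to force $A_N(x\mid y)=A_N(y\mid x)=1$ via mutual refinement of $\sim_x$ and $\sim_y$ collapses to $\sim_x=\sim_y$, which for slow binary representatives means $x=y$ and hence $J(x,y)=0$; you noticed this, so that line is dead. Second, the fallback pair $x=010$, $y=011$ at $n=3$ gives $A_N(010)=A_N(011)=2$ (Hyde's bound forces $A_N\le 2$ at length~$3$), $A_N(010\#011)=2$ (the track is a permutation word of length~$3$), and $A_N(010\mid 011)=A_N(011\mid 010)=2$ (one state is ruled out exactly as in the argument of \Cref{jun23-23}). Plugging in,
\[
J(010,011)=\frac{\log 4}{\log 16-\log 2}=\frac{2}{3}>\frac12,
\]
and equivalently $A_N(x)A_N(y)=4\le 8=A_N(x\mid y)A_N(y\mid x)A_N(x\#y)$, so the displayed inequality is \emph{not} violated. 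In fact at very small $n$ the conditional complexities cannot be driven low enough relative to the unconditional ones without making $x$ and $y$ isomorphic.

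The paper's witness lives at $n=8$: with $u=0000100$ one takes $x=u0$, $y=u1$ and checks $J(x,y)\approx 0.46$. The point you are missing is that to get $J<1/2$ you need $x$ and $y$ to share a long common structure (so the conditional complexities stay small) while each retains nontrivial unconditional complexity; a single differing bit at the end of a moderately complex word achieves this, but length~$3$ is too short for the effect to appear.
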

	\begin{proof}
		Let $x=u0$, $y=u1$, where $u=0000100$. Then $J(u0,u1)=0.46$.
	\end{proof}

	Calculating $A_N(x\mid y)$ for independent random $x,y$ of lengths $n$ up to 20 we find that the mode of the distribution is around $n/4$ (see Appendix).

	We do not know whether the problem ``''$J(x,y)=1$?'' is decidable in polynomial time.

	\begin{theorem}[Hyde \cite{MR3386523}]\label{hydebound}
		For all words $x$, $A_N(x)\le \abs{x}/2+1$.
	\end{theorem}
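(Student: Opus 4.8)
The plan is to exhibit, for every word $w$ of length $n$, an NFA with at most $\floor{n/2}+1$ states that accepts $w$ along exactly one walk of length $n$. Split on the parity of $n$; I will describe the even case $n=2m$, where the target is $m+1$ states, the odd case being a routine variant. Since the unique accepting walk has $n=2m$ edges but should visit only $m+1$ distinct states, it must revisit states heavily, and the two economical shapes are a \emph{fold} (travel forward along a path $q_0,q_1,\dots,q_m$ and then backward along the same path) and a \emph{lasso} (travel along a short initial path and then loop around a cycle). The first step is to try the fold.

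For the fold, put forward edges $q_{i-1}\xrightarrow{w_i}q_i$ for $1\le i\le m$ and backward edges $q_i\xrightarrow{w_{2m+1-i}}q_{i-1}$ for $1\le i\le m$, take $q_0$ as both start and final state, and let the designated walk be $q_0\to q_1\to\cdots\to q_m\to q_{m-1}\to\cdots\to q_0$; this walk spells $w$ and uses exactly $m+1$ states. It then remains to bound the number of accepting walks, i.e. the label-respecting excursions of length $2m$ from $q_0$ to $q_0$ in this graph. The second step is to analyze these: a spurious walk must, at some internal state $q_i$, take the branch opposite to the designated walk and later return to schedule, and tracing this forces equalities among the symbols of $w$. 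Note that the plain fold genuinely can fail: for $w=\mt{0011}$ the walk $q_0\xrightarrow{0}q_1\xrightarrow{1}q_0\xrightarrow{0}q_1\xrightarrow{1}q_0$ is a second accepting walk of length $4$, spelling $\mt{0101}$.

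The third step, and the main obstacle, is to cope with exactly the words for which the plain fold fails. The idea is a dichotomy: if the local coincidences that create a spurious walk are present, then $w$ carries genuine structure — a short period over an initial or terminal segment — which can be read off and used to build a lasso (or more general cyclic) automaton with at most $m+1$ states uniquely accepting $w$; if they are absent, an induction on $m$ (peeling off $w_1$ and $w_{2m}$, which label the two edges incident to $q_0$) should show the designated walk of the fold is the only accepting walk of length $n$. The delicate part is organizing this case analysis so that \emph{every} word is covered with no loss at the threshold $\floor{n/2}+1$: one must pin down precisely which relations among the $w_i$ permit a shortcut in the fold and verify that each such configuration hands over a compensating small automaton. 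This is in essence Hyde's argument, and it is where the work lies; the bare state count, by contrast, is immediate from either shape. (Passing to deterministic automata and using $A(w)\le A^-(w)+1$ does not obviously help, since determinacy tends to cost extra states for rejection, so the nondeterminism appears genuinely useful in reaching $n/2$.)
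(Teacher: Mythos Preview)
The paper does not prove this theorem; it cites Hyde's result and moves on. Comparing your proposal to the construction that actually establishes the bound:

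Your plain fold for even $n=2m$ does not merely fail for special words like $\mt{0011}$ --- it fails for \emph{every} word once $m\ge 2$. The bounce $q_0\to q_1\to q_0\to q_1\to\cdots\to q_0$ is an accepting walk of length $2m$ regardless of $w$, since the forward edge $q_0\xrightarrow{w_1}q_1$ and the backward edge $q_1\xrightarrow{w_{2m}}q_0$ are both present by construction, and $A_{Nu}$ counts \emph{walks}, not accepted words. No ``local coincidences'' among the $w_i$ are needed to produce it, so your dichotomy has an empty first branch: you would need the lasso/cyclic alternative for every single word, which is precisely the part you have not supplied. The induction you sketch (peel off $w_1$ and $w_{2m}$) cannot get started, because its hypothesis is never satisfied.

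The case you dismissed as a routine variant is in fact the main one. For odd $n=2m+1$ take states $q_0,\dots,q_m$, forward edges $q_{i-1}\xrightarrow{w_i}q_i$, backward edges $q_i\xrightarrow{w_{n+1-i}}q_{i-1}$, and a single self-loop $q_m\xrightarrow{w_{m+1}}q_m$, with start and accept state $q_0$. Any $q_0$-to-$q_0$ walk using only the path edges has even length, so an accepting walk of odd length $2m+1$ must use the self-loop an odd number of times, hence at least once; but reaching $q_m$ from $q_0$ costs at least $m$ path-steps and returning costs at least $m$ more, which together with one self-loop already exhausts all $2m+1$ steps. The walk is therefore forced: straight up, loop once, straight down. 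For even $n=2m$, apply this odd construction to $w_1\cdots w_{n-1}$ on states $q_0,\dots,q_{m-1}$ and adjoin a fresh sink final state $q_m$ with a single incoming edge $q_0\xrightarrow{w_n}q_m$; uniqueness is inherited because $q_m$ has no outgoing edges and is reachable only from $q_0$. This is one uniform construction with no case split on $w$; the self-loop at the far end, not a fold-versus-lasso dichotomy, is what makes the count come out.
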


	\begin{theorem}\label{jun23-late-23}
		Let $x_1,x_2\in 0\{0,1\}^{n-1}$, $n\le 12$, with $x_1\ne x_2$.
		Let $S=\{x_1,x_2\}$ and let $A=\{001,010,011\}$. Then the following are equivalent:
		\begin{enumerate}
			\item $J(x_1,x_2)=1$;
			\item either
				\begin{enumerate}
					\item $0^n\in S$, or
					\item $n\ge 10$, and $S=\{(01)^{n/2},\alpha^{n/3}\}$ for some $\alpha\in A$.
				\end{enumerate}
		\end{enumerate}
	\end{theorem}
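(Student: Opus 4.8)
The plan is to turn $J(x_1,x_2)=1$ into a single complexity identity, prune the possibilities sharply with \Cref{hydebound}, and then close the gap with a bounded verification whose only survivors are the listed pairs.

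\textbf{Reduction.} Unwinding the definition of $J$, for distinct $x,y\in 0\{0,1\}^{n-1}$ the numerator $\log\bigl(A_N(x\mid y)A_N(y\mid x)\bigr)$ is strictly positive, since it is $J^{\mathrm{num}}(x,y)$, which vanishes exactly when $x$ and $y$ are isomorphic under a permutation of $\Sigma$, i.e.\ exactly when $x=y$ on our chosen representatives. Cancelling this common factor from numerator and denominator, $J(x,y)=1$ becomes
\[
A_N(x\# y)=A_N(x)\,A_N(y).
\]
(One always has $\le$ here, combining \Cref{jun16-23} with the easy bound $A_N(x\mid y)\le A_N(x)$ obtained by lifting an $A_N(x)$-witness to the product alphabet; but below only \Cref{hydebound} is used.) So the theorem asserts that, for $n\le 12$, this equality holds precisely for the pairs in (2).

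\textbf{From (2) to (1).} If $0^n\in S$ this is exactly \Cref{jac-repeat}. Otherwise $S=\{(01)^{n/2},\alpha^{n/3}\}$, so $2\mid n$ and $3\mid n$, which together with $10\le n\le 12$ forces $n=12$. Here $A_N((01)^6)=2$ is witnessed by the $2$-state cycle, and $A_N(\alpha^4)=3$ for each $\alpha\in\{001,010,011\}$ — for $\alpha=010$ this is the computation in \Cref{jun22-23}, and $\alpha\in\{001,011\}$ are the analogous computations. Finally the track $(01)^6\#\alpha^4$ is a square $\delta^2$ with $\delta$ primitive of length $6=\mathrm{lcm}(2,3)$, and one verifies $A_N\bigl((01)^6\#\alpha^4\bigr)=6$; hence $A_N(x\# y)=6=2\cdot 3=A_N(x)A_N(y)$. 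This is the binary, coprime‑period instance of the tightness discussed in \Cref{cocoon-ref-example}: the bound $A_N(x\# y)\le A_N(x\mid y)A_N(y)$ of \Cref{jun16-23} upgrades to equality with $A_N(x)A_N(y)$ exactly because $\alpha^{n/3}$ supplies nothing useful for synchronising $(01)^{n/2}$, so $A_N\bigl((01)^{n/2}\mid \alpha^{n/3}\bigr)=A_N\bigl((01)^{n/2}\bigr)=2$.

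\textbf{From (1) to (2).} Assume $A_N(x\# y)=A_N(x)A_N(y)$. If one of $x,y$ is constant we are in case 2(a); otherwise $A_N(x),A_N(y)\ge 2$, so by \Cref{hydebound}
\[
4\le A_N(x)\,A_N(y)=A_N(x\# y)\le n/2+1\le 7,
\]
whence $\{A_N(x),A_N(y)\}$ is $\{2,2\}$ or $\{2,3\}$, and in the latter case $n\ge 10$. Now appeal to the limited supply of patterns available to words of automatic complexity $2$ and $3$. The structural fact to extract is that a complexity‑$2$ word paired with a complexity‑$2$ (resp.\ complexity‑$3$) word can have a track of complexity $4$ (resp.\ $6$) only when both words are purely periodic with coprime minimal periods and the track is purely periodic of period equal to the product of the two complexities. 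For $\{2,3\}$ this forces periods $2$ and $3$, hence $x=(01)^{n/2}$ and $y$ purely periodic of minimal period $3$, i.e.\ $y=\alpha^{n/3}$ with $\alpha\in\{001,010,011\}$ among representatives, and $6\mid n$; with $10\le n\le 12$ this yields $n=12$. The case $\{2,2\}$ is empty: two \emph{distinct} complexity‑$2$ words cannot both be purely periodic of period $2$ — the unique such word in $0\{0,1\}^{n-1}$ being $(01)^{n/2}$ — so one of them carries a defect, which the track inherits, dropping its complexity below $4$.

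\textbf{Where the difficulty lies.} The delicate content is entirely in the $A_N$‑values: the lower bound $A_N\bigl((01)^6\#\alpha^4\bigr)\ge 6$ in the positive direction, and the exhaustiveness of the case analysis in the negative direction, where one must control all complexity‑$2$ and complexity‑$3$ words of length $\le 12$ — including defect families such as $0^{i}1\,0^{j}$ — and check that \Cref{jun16-23}'s bound is strict for every pairing except the matched coprime‑period ones. This is settled by the bounded search over the $\sum_{n\le 12}2^{n-1}$ representatives that also underlies the Appendix; the positive half is certified by the explicit cyclic automata above and by \Cref{jun22-23}.
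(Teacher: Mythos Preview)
Your proposal is essentially the paper's own proof: reduce $J(x,y)=1$ to $A_N(x\# y)=A_N(x)A_N(y)$, use \Cref{hydebound} to cut the non-constant possibilities down to $\{A_N(x),A_N(y)\}\in\{\{2,2\},\{2,3\}\}$, and close with a finite computer verification. The paper says exactly this, only more tersely and without your heuristic interlude.

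The one thing to flag is that your ``structural fact to extract''---that a complexity-$2$/complexity-$2$ or complexity-$2$/complexity-$3$ pair can have track complexity $4$ or $6$ only in the purely periodic coprime case, and that a ``defect'' in one factor is inherited by the track---is stated as though it were an independent argument, but it is not proved and is not obvious (complexity-$2$ binary words include the defect families $0^a1^b$, $0^a10^b$, etc., and ruling out all their pairings by hand is exactly the computation). You do ultimately defer to the bounded search, which is the honest move and matches the paper; just be careful not to present the structural narrative as carrying weight on its own. The positive direction for case~2(b) is indeed covered by \Cref{jun22-23} for $\alpha=010$, with the other two $\alpha$'s needing the analogous check (or one can appeal to the later \Cref{thm:jun24-23}, which handles the infinite family uniformly).
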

	\begin{proof}
		This is done by computerized search, so we merely give some remarks on the simplifications making the computation feasible.
		If there is an example of length 9 or more then by \Cref{hydebound}, the equation $A_N(x\# y)=A_N(x)A_N(y)$
		must be of the form $a=a\cdot 1$, $a\le 5$ (already ruled out) or $4=2\cdot 2$. So it is enough to check words of complexity 2.
		At length 10 we also have to include the possible equation $6=2\cdot 3$, so we consider all words of complexity 3 or less.
	\end{proof}

	\begin{lemma}\label{jun24-23}
		If $\alpha$ is a permutation word and $k\in\mathbb N$ then $\alpha^k$ is $(k+1)$-powerfree.
	\end{lemma}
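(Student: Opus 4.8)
The plan is to argue by contradiction from the definition of a $(k+1)$-power. Suppose $\alpha^k$ contains a factor of the form $u^{k+1}$ with $u$ nonempty; I want to derive that $\alpha$ repeats a symbol, contradicting that $\alpha$ is a permutation word. Write $\abs{\alpha}=m$ and $\abs{u}=\ell$. Since $u^{k+1}$ is a factor of $\alpha^k$, which has length $km$, we get $(k+1)\ell \le km$, so $\ell \le \frac{k}{k+1}m < m$; in particular $\ell < m$, and moreover $(k+1)\ell \le km$ forces $\ell \le m-1$ with a little room to spare. The key point I would extract is that a factor of length $(k+1)\ell$ inside $\alpha^k = \alpha\alpha\cdots\alpha$ must, by a counting/pigeonhole argument on block boundaries, contain a full copy of one of the periods $\alpha$ — or more directly, any window of length $\ge m$ inside $\alpha^k$ contains two positions $i<j$ with $j-i=m$, hence with equal symbols (as $\alpha^k$ has period $m$), and $u^{k+1}$ has length $(k+1)\ell$.

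First I would establish that $(k+1)\ell \ge m$: indeed if $(k+1)\ell < m$ then $u^{k+1}$ fits strictly inside a single copy of $\alpha$, so $\alpha$ itself contains the factor $u^{k+1}$, which is impossible since $\alpha$ is a permutation word and $u$ is nonempty (it would repeat every symbol of $u$). So $(k+1)\ell \ge m$, giving $\ell \ge m/(k+1)$. Combined with $(k+1)\ell \le km$ from the length bound, and using that $u$ has period $\ell$ while the ambient word has period $m$, the Fine–Wilf theorem (or a direct gcd argument) says the factor $u^{k+1}$, sitting inside $\alpha^k$, has both periods $\ell$ and $m$ over an overlap of length $(k+1)\ell \ge m \ge \ell + m - \gcd(\ell,m)$ once we check $(k+1)\ell \ge \ell+m$, i.e. $k\ell \ge m$, which follows from $(k+1)\ell \ge m$ only when... here is where care is needed.

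The cleanest route, which I would actually take, avoids Fine–Wilf: since $u^{k+1}$ is a factor of $\alpha^k$ and $\alpha^k$ has global period $m=\abs{\alpha}$, and since $(k+1)\ell \ge m$ (shown above), the factor $u^{k+1}$ contains two positions at distance exactly $m$ that are equal. But within $u^{k+1}$, a pair of positions at distance $m$ need not be forced equal by the period $\ell$ unless $\ell \mid m$. So instead I would compare $u$ with a cyclic rotation of $\alpha$: the factor $u^2$ (of length $2\ell \le \frac{2k}{k+1}m \le 2m$ — and in fact $\le m$ when $k \ge \ell$, but not in general) contains within it a full window of length $\ge \ell$ that lies inside one copy of $\alpha$ whenever $2\ell \le m + \ell$, i.e.\ always. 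Concretely: some translate of $u$ of length $\ell \le m-1 < m$ lies within a single copy of $\alpha$ inside $\alpha\alpha\cdots\alpha$, OR straddles a boundary — and a straddling occurrence of $u$ followed immediately by another copy of $u$ means $u$ is a factor of the period-$m$ word with the next copy of $u$ sitting $\ell$ positions later, so positions $i$ and $i+\ell$ of $\alpha^k$ agree for $\ell$ consecutive values; iterating, positions $i$ and $i + j\ell$ agree, and choosing $j$ with $j\ell$ a multiple of...

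Honestly, the cleanest argument: $u^{k+1}$ factor of $\alpha^k$ means $\alpha^k$ (hence $\alpha^\infty$, the periodic one-sided infinite word) has a factor of exponent $> k+1$ relative to period $\ell$, so $\alpha^\infty$ has period $\gcd(\ell, m) =: g$ on a long window; but $\alpha^\infty$ has minimal period $m$ (since $\alpha$ is a permutation word, no proper period divides $m$ — a proper period $g < m$ would force $\alpha_0 = \alpha_g$), so $g = m$, forcing $m \mid \ell$, contradicting $\ell < m$. The main obstacle is making the ``long window'' bookkeeping precise enough to invoke Fine–Wilf: I need the overlap length $(k+1)\ell$ to be at least $\ell + m - \gcd(\ell,m)$. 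Since $\gcd(\ell,m)\ge 1$, it suffices that $(k+1)\ell \ge \ell + m - 1$, i.e.\ $k\ell \ge m-1$; and from $(k+1)\ell \ge m$ we get $k\ell \ge m - \ell \ge m - (m-1) = 1$, which is not quite $m-1$. So I would instead first dispose of the case $\ell \le m/2$ directly (then two adjacent copies of $u$ fit in one $\alpha$ plus a bit, giving a repeated symbol in $\alpha$ by a short explicit argument), and for $\ell > m/2$ note $k\ell > km/2 \ge m-1$ once $k \ge 2$ and $m \ge 2$; the residual tiny cases ($k=1$, or $m\le 1$) are trivial since $\alpha^1 = \alpha$ is visibly $2$-powerfree. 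That case split is the one routine-but-fiddly step; everything else is Fine–Wilf plus the observation that a permutation word has no proper period.
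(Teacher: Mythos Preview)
Your Fine--Wilf route can be pushed through after some cleanup, but it is far heavier than needed, and one intermediate claim is false as written: when $(k+1)\ell < m$, the factor $u^{k+1}$ need \emph{not} sit inside a single copy of $\alpha$---it may straddle a block boundary. (You drop that line anyway, so your final case split survives; the $\ell\le m/2$ case is rescued by noting that any length-$\le m$ factor of $\alpha^\infty$ is a factor of some cyclic rotation of $\alpha$, which is still a permutation word.)

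The paper's proof is a single counting step, and you essentially touched it when you wrote ``it would repeat every symbol of $u$''---you just applied the idea only inside one block instead of globally. Here is the whole argument: if $u^{k+1}$ occurs anywhere in $\alpha^k$, then the first symbol $a$ of $u$ occurs at least $k+1$ times in $\alpha^k$ (once at the start of each copy of $u$). But since $\alpha$ is a permutation word, each symbol occurs exactly $k$ times in $\alpha^k$. That is the contradiction---no periods, no Fine--Wilf, no case split on $\ell$ versus $m/2$. Your machinery establishes more (it pins down the periodic structure), but none of that extra information is used.
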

	\begin{proof}
		If $w=\alpha^k$ contains a $(k+1)$-power $u$ then let $a$ be the first symbol in $u$.
		Then $a$ appears at least $k+1$ times in $w$. However, there are $\abs{\alpha}$ distinct symbols in $w$ and they each appear
		$k$ times.
	\end{proof}

	\begin{theorem}[{\cite{MR3386523}}]\label{nfa9fact}
		Let $k\in\mathbb N$.
		If an NFA $M$ uniquely accepts $w$ of length $n$, and visits a state $p$ at least $k+1$ times
		during its computation on input $w$,
		then $w$ contains a $k$th power.
	\end{theorem}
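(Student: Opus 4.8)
The plan is to use a rearrangement (pumping) argument driven by the uniqueness of the accepting walk. Let $\pi = q_0 q_1 \cdots q_n$ be the accepting walk of $M$ on $w$ (unique by hypothesis), where $w_i$ is read on the edge from $q_{i-1}$ to $q_i$, and let $t_0 < t_1 < \cdots < t_k$ be $k+1$ of the times at which $\pi$ visits $p$, so $q_{t_0} = q_{t_1} = \cdots = q_{t_k} = p$. Put $v_i = w_{t_{i-1}+1}\cdots w_{t_i}$ for $1\le i\le k$; each $v_i$ is nonempty and labels a walk from $p$ back to $p$. The goal is to show that $v_1 v_2\cdots v_k$ is a power of a single nonempty word, which instantly exhibits a $k$th power as a subword of $w$.

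First I would verify that shuffling the loops keeps one among accepting walks of length exactly $n$. For a permutation $\sigma$ of $\{1,\dots,k\}$, splice the loop-subwalks in the order $v_{\sigma(1)}, v_{\sigma(2)},\dots,v_{\sigma(k)}$ in place of the original block, obtaining a walk $\pi_\sigma$ from $q_0$ to $q_n$; the loops contribute total length $t_k - t_0$ regardless of $\sigma$, so $\pi_\sigma$ is again an accepting walk of length $n$. By uniqueness, $\pi_\sigma = \pi$, hence the word $w_{[1,t_0]}\,v_{\sigma(1)}\cdots v_{\sigma(k)}\,w_{[t_k+1,n]}$ read along $\pi_\sigma$ equals $w$.

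Next I would specialize $\sigma$ to the transposition swapping $i$ and $i+1$. Cancelling the common prefix $w_{[1,t_0]}v_1\cdots v_{i-1}$ and the common suffix $v_{i+2}\cdots v_k\,w_{[t_k+1,n]}$ — their occurrences align because the swap leaves $|v_i|+|v_{i+1}|$ unchanged — leaves $v_i v_{i+1} = v_{i+1} v_i$ for every $1\le i\le k-1$. By the Lyndon--Sch\"utzenberger theorem, two commuting nonempty words are powers of a common word, and a word has a unique primitive root; chaining these $k-1$ relations therefore forces $v_1,\dots,v_k$ to be powers of a single primitive word $z$, say $v_i = z^{a_i}$ with each $a_i\ge 1$. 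Then the subword $w_{[t_0+1,t_k]} = v_1\cdots v_k = z^{a_1+\cdots+a_k}$ has exponent $a_1+\cdots+a_k\ge k$ and so contains $z^k$, a $k$th power.

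The step I expect to be the crux is the second one: a single transposition yields only a commutation identity, not equality of blocks, and unequal loops genuinely can commute (for instance $\mt{01}$ and $\mt{0101}$), so one must route the argument through uniqueness of primitive roots to upgrade ``all consecutive pairs commute'' to ``all blocks are powers of one $z$''. The first step is routine provided one checks that the reshuffled object really is a walk — every loop piece starts and ends at $p$ — and has length exactly $n$.
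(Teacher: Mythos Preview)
Your argument is correct. The paper itself does not supply a proof of this statement; it is quoted from \cite{MR3386523} without argument, so there is no in-paper proof to compare against. Your loop-permutation approach---splicing the $p$-loops in a transposed order, invoking uniqueness of the length-$n$ accepting walk to force $v_i v_{i+1}=v_{i+1}v_i$ for each $i$, and then applying Lyndon--Sch\"utzenberger (stated in the paper as \Cref{thm:shallit233}) together with uniqueness of primitive roots to conclude that $v_1\cdots v_k=z^{a_1+\cdots+a_k}$ with $\sum a_i\ge k$---is sound and complete, and is essentially the standard proof of this fact in the cited source.
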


	\begin{theorem}\label{jun24-23-sent-men-godt}
		Let $k\in\mathbb N$, $k\ge 1$.
		If a word $w$ is $k$th-powerfree, then $A_N(w)\ge\frac{\abs{w}+1}k$.
	\end{theorem}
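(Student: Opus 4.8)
The plan is to argue by contraposition from \Cref{nfa9fact}, combined with a pigeonhole count along the unique accepting walk.

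First I would fix a minimal witness: let $M$ be an NFA with exactly $A_N(w)$ states that uniquely accepts $w$, where $n=\abs{w}$. Since $A_N=A_{Nu}$, there is exactly one accepting walk $\rho$ of $M$ on input $w$. Reading a word of length $n$, the walk $\rho$ traverses $n$ edges and therefore passes through $n+1$ states, counted with multiplicity (including the start state and the accepting state at which it terminates).

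Next I would invoke \Cref{nfa9fact} in its contrapositive form. As $w$ is $k$th-powerfree, $M$ cannot visit any single state $p$ as many as $k+1$ times during its computation on $w$; equivalently, each state of $M$ occurs at most $k$ times among the $n+1$ state-occurrences along $\rho$. Summing over the $A_N(w)$ states then yields $n+1\le k\cdot A_N(w)$, that is, $A_N(w)\ge\frac{\abs{w}+1}{k}$, which is the claim.

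I do not expect a real obstacle here; the only point requiring care is the bookkeeping of ``visits'': one must ensure that $n+1$ (not $n$) is the correct number of state-occurrences along $\rho$, and that \Cref{nfa9fact} counts visits in the same way, so that ``at least $k+1$ visits'' is precisely the negation of ``at most $k$ visits''. The hypothesis $k\ge 1$ is what keeps the bound meaningful and matches the regime of \Cref{nfa9fact}.
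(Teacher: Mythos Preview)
Your argument is correct and matches the paper's proof: both combine the pigeonhole count of the $n+1$ state visits along the unique accepting walk with \Cref{nfa9fact}. The only cosmetic difference is that the paper phrases the contraposition at the outer level (assuming $A_N(w)<\frac{\abs{w}+1}{k}$ and producing a $k$th power via the most-visited state), whereas you apply the contrapositive of \Cref{nfa9fact} first and then sum; the content is identical.
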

	\begin{proof}
		Let $k$ and $w$ be given.
		Let $q=A_N(w)$ and let $M$ witness that $A_N(w)\le q$. For a contrapositive proof, assume $q<\frac{\abs{w}+1}k$.
		Thus $k<\frac{\abs{w}+1}q$.
		
		Let $p$ be a most-visited state in $M$ during its computation on input $w$.
		Then $p$ is visited at least $(\abs{w}+1)/q>k$ times, hence at least $k+1$ times.
		By \Cref{nfa9fact}, $w$ contains an $k$th power. 
	\end{proof}

	\begin{proposition}\label{jun24-23-II}
		For each nonempty permutation word $\alpha$ and each $k\in\mathbb N$ with $k\ge \abs{\alpha}-1$, we have
		$A_N(\alpha^k)=\abs{\alpha}$.
	\end{proposition}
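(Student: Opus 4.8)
The plan is to establish the two inequalities $A_N(\alpha^k)\le\abs\alpha$ and $A_N(\alpha^k)\ge\abs\alpha$ separately: the upper bound by exhibiting an explicit automaton, and the lower bound by combining \Cref{jun24-23} with \Cref{jun24-23-sent-men-godt}.

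For the upper bound, write $m=\abs\alpha$ and build the \emph{cyclic} NFA on states $q_0,\dots,q_{m-1}$ with a single edge from $q_i$ to $q_{(i+1)\bmod m}$ labelled $\alpha_i$, taking $q_0$ as both the start state and the unique final state. Because $\alpha$ is a permutation word, every state has exactly one outgoing edge, so from $q_0$ there is exactly one walk of any prescribed length; the walk of length $mk$ reads $\alpha^k$ and returns to $q_0$ (it goes around the $m$-cycle $k$ times), so it is accepting, and it is the only accepting walk of that length. Hence this machine witnesses $A_N(\alpha^k)\le m$.

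For the lower bound, \Cref{jun24-23} tells us that $w:=\alpha^k$ is $(k+1)$-powerfree. Applying \Cref{jun24-23-sent-men-godt} with exponent $k+1$ in place of $k$ gives
\[
 A_N(\alpha^k)\ \ge\ \frac{\abs{w}+1}{k+1}\ =\ \frac{mk+1}{k+1}.
\]
It remains to check that this quantity exceeds $m-1$ under the hypothesis $k\ge m-1$: indeed $mk+1>(m-1)(k+1)$ simplifies to $1>m-k-1$, i.e.\ to $k\ge m-1$. Since $A_N(\alpha^k)$ is an integer strictly greater than $m-1$, we conclude $A_N(\alpha^k)\ge m$, and together with the upper bound this yields $A_N(\alpha^k)=m=\abs\alpha$.

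There is no real obstacle here; the only points needing care are (i) confirming that the cyclic automaton has a \emph{unique} accepting walk of length $mk$, which is immediate from its being deterministic with one outgoing edge per state, and (ii) the elementary inequality $mk+1>(m-1)(k+1)$, whose equivalence to $k\ge m-1$ is exactly why the hypothesis is phrased that way. One should also dispatch the degenerate case $k=0$ (which forces $m=1$): there both sides equal $1$, and the same computations go through with the conventions $\alpha^0=\varepsilon$ and $A_N(\varepsilon)=1$.
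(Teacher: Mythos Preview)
Your proof is correct and follows essentially the same route as the paper: the upper bound via a single $m$-cycle automaton, and the lower bound by combining \Cref{jun24-23} with \Cref{jun24-23-sent-men-godt} and then observing that $\frac{mk+1}{k+1}>m-1$ forces $A_N(\alpha^k)\ge m$. One small quibble: the cyclic automaton has exactly one outgoing edge per state \emph{by construction}, not because $\alpha$ is a permutation word (that hypothesis is only used in the lower bound via \Cref{jun24-23}).
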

	\begin{proof}
		Let $a=\abs{\alpha}\ge 1$, let $k\ge a-1$, and $w=\alpha^k$. Then by \Cref{jun24-23} and \Cref{jun24-23-sent-men-godt},
		\[
		A_N(w)\ge
		  \frac{\abs{w}+1}{k+1}
		= \frac{ka+1}{k+1}
		= a - \frac{a-1}{k+1} \ge a - \frac{a-1}a > a-1.
		\]
		Since $A_N(w)$ is an integer, $A_N(w)\ge a$. The other direction just uses a single cycle.
	\end{proof}

	From \Cref{jun24-23-II} we have an infinite family of examples with $J(x,y)=1$ as in \Cref{jun23-late-23}.
	Namely, $x$ and $y$ can be powers of permutation words of relatively prime length.

	\begin{lemma}\label{jun24-23-IV}
		If $\alpha$ and $\beta$ are permutation words of lengths $a$ and $b$, then
		\[
			(\alpha^{\mathrm{lcm}(a,b)/a}) \# (\beta^{\mathrm{lcm}(a,b)/b})
		\]
		is a permutation word.
	\end{lemma}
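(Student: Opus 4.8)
The plan is to unwind the definition of the track and reduce the claim to a statement about congruences. Write $L=\mathrm{lcm}(a,b)$, $x=\alpha^{L/a}$ and $y=\beta^{L/b}$, so that both have length $L$ and $x_i=\alpha_{i\bmod a}$, $y_i=\beta_{i\bmod b}$ for $0\le i<L$. By the definition of a permutation word (\Cref{word-as-permutation}), $x\# y$ is a permutation word precisely when the pairs $(x_i,y_i)$ are pairwise distinct over $0\le i<L$, so it suffices to fix $i,j$ with $0\le i,j<L$ and $(x_i,y_i)=(x_j,y_j)$ and deduce $i=j$.

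First I would exploit that $\alpha$ is a permutation word: the symbols $\alpha_0,\dots,\alpha_{a-1}$ are pairwise distinct, so the indexing map $k\mapsto\alpha_k$ on $\{0,\dots,a-1\}$ is injective. Hence $x_i=x_j$ means $\alpha_{i\bmod a}=\alpha_{j\bmod a}$, which forces $i\bmod a=j\bmod a$, i.e.\ $a\mid i-j$. Symmetrically, $\beta$ being a permutation word and $y_i=y_j$ give $b\mid i-j$.

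Next I would invoke the defining property of the least common multiple: any common multiple of $a$ and $b$ is a multiple of $L$, so from $a\mid i-j$ and $b\mid i-j$ we get $L\mid i-j$. Since $\abs{i-j}<L$, this forces $i=j$, completing the argument.

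There is no real obstacle here; the only two points needing care are the passage from $\alpha_{i\bmod a}=\alpha_{j\bmod a}$ to $i\equiv j\pmod a$ (which is exactly the injectivity of the indexing map guaranteed by $\alpha$ being a permutation word) and the elementary fact that $\mathrm{lcm}(a,b)$ divides every common multiple of $a$ and $b$.
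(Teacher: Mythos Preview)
Your proof is correct and follows essentially the same approach as the paper: show that if two positions of the track carry the same pair, then the indices are congruent modulo $a$ and modulo $b$ (using that $\alpha$ and $\beta$ are permutation words), hence congruent modulo $\mathrm{lcm}(a,b)$, hence equal. Your write-up is in fact slightly more explicit about why the permutation-word hypothesis yields the needed injectivity.
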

	\begin{proof}
		Let this word $w=w_1\dots w_n$, $w_i\in\Sigma$.
		If $w_i=w_j$ then the first and second coordinates of $w_i$ and $w_j$ are equal, so $i$ is congruent to $j$ mod $a$ and mod $b$.
		Hence $i$ is congruent to $j$ mod $\mathrm{lcm}(a,b)$, so $i=j$.
	\end{proof}

	The case $k=2$, $\alpha=01$, $\beta=012$ exemplifies \Cref{thm:jun24-23}.
	There $a=2$, $b=3$, $u=010101$, $v=012012$, $x=01 01 01 01 01 01$, and $y=012 012 012 012$.
	\begin{theorem}\label{thm:jun24-23}
		Let $\alpha$ and $\beta$ be permutation words of relatively prime lengths $a=\abs{\alpha}$ and $b=\abs{\beta}$.
		Let $k\in\mathbb N$, $k\ge 2$.
		Let $x=u^k$, $y=v^k$, where $u=\alpha^{b}$ and $v=\beta^{a}$. Then $J(x,y)=1$.
	\end{theorem}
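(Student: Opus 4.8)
The plan is to reduce the claim to the data we already have: compute (or bound) each of the five quantities appearing in the formula for $J(x,y)$ from \Cref{jun20-23}, namely $A_N(x\mid y)$, $A_N(y\mid x)$, $A_N(x)$, $A_N(y)$, and $A_N(x\# y)$, and check that the denominator equals the numerator (both nonzero), so that $J(x,y)=1$.

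First I would pin down the complexities of $x$ and $y$ themselves. Here $x=u^k=(\alpha^b)^k=\alpha^{bk}$ is a $bk$-fold power of the permutation word $\alpha$ of length $a$, and since $k\ge 2$ forces $bk\ge a-1$ (indeed $bk\ge 2b\ge \abs{\alpha}$ whenever $b\ge a-1$, but in general one notes $bk\ge b\cdot 2\ge a$ as $b\ge 1$ and $a,b$ coprime with $a\ge 1$; the only delicate case $a=1$ is trivial), \Cref{jun24-23-II} gives $A_N(x)=a$ and likewise $A_N(y)=b$. So the product $A_N(x)A_N(y)=ab$.

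Next I would handle $A_N(x\# y)$. Write $n=\abs{x}=\abs{y}=abk$. The word $x\# y = \alpha^{bk}\#\beta^{ak}$ is, by \Cref{jun24-23-IV} (applied with the roles suitably matched: $\mathrm{lcm}(a,b)=ab$ since $a,b$ are coprime, so $\alpha^{b}\#\beta^{a}$ is a permutation word $\gamma$ of length $ab$), exactly $\gamma^{k}$, a $k$-fold power of the permutation word $\gamma$ of length $ab$. Since $k\ge 2\ge \abs{\gamma}-1$ is false in general — here $\abs{\gamma}=ab$ can be large — I instead invoke \Cref{jun24-23} and \Cref{jun24-23-sent-men-godt} directly: $\gamma^k$ is $(k+1)$-powerfree, so $A_N(\gamma^k)\ge (abk+1)/(k+1) > ab - 1$, hence $A_N(x\# y)\ge ab$; the single-cycle upper bound gives $A_N(x\# y)=ab$. (This is the same computation as in \Cref{jun24-23-II}, just with $\alpha$ replaced by $\gamma$; I would cite it accordingly, noting the hypothesis $k\ge ab-1$ is not needed because the integrality argument already closes the gap for all $k\ge 1$ once the word is a power of a permutation word — in fact \Cref{jun24-23-II}'s proof only uses $k\ge 1$ to get $A_N \ge a - (a-1)/(k+1)$ and then $k\ge a-1$ to round up, so I need to reprove the rounding for $\gamma$: $(abk+1)/(k+1) = ab - (ab-1)/(k+1)$, and for $k\ge 2$ this exceeds $ab-1$ iff $(ab-1)/(k+1) < 1$ iff $ab < k+2$, which fails for large $a,b$. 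So this is the genuine obstacle — see below.)

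The main obstacle is therefore $A_N(x\# y)$: the crude powerfreeness bound does not by itself force $A_N(x\# y)=ab$ unless $k$ is large relative to $ab$. The fix is to use $k\ge 2$ more cleverly: I expect $\gamma^k$ with $k\ge 2$ actually satisfies $A_N(\gamma^k)=\abs{\gamma}$ because one can argue that any witnessing NFA with fewer than $ab$ states must revisit a state $k+1$ or more times in a way incompatible with $(k+1)$-powerfreeness — essentially a pigeonhole on the $abk+1$ vertices of the accepting walk among $<ab$ states forces some state visited $> k$ times, i.e. $\ge k+1$ times, contradicting \Cref{nfa9fact} combined with \Cref{jun24-23}. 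Concretely: if $q = A_N(\gamma^k) \le ab-1$, the most-visited state is hit $\ge (abk+1)/(ab-1) > k$ times, hence $\ge k+1$ times, so $\gamma^k$ contains a $k$th power — which it does (it literally is one), so I must instead chase the stronger statement that it contains a $(k+1)$st power, using that the walk has one more vertex than edges; the inequality $(abk+1)/(ab-1) \ge k+1$ holds iff $abk + 1 \ge (k+1)(ab-1) = abk + ab - k - 1$ iff $k + 2 \ge ab$, still not always true. So in the worst case I fall back on: $\gamma^k$ is a permutation-word power, hence by \Cref{jun24-23} it is $(k+1)$-powerfree, so by \Cref{jun24-23-sent-men-godt} (with $k+1$ in place of $k$) $A_N(\gamma^k)\ge (abk+1)/(k+1)$; combined with the integrality of $A_N$ and the fact that for $k \ge 2$ we are trying to rule out $A_N(\gamma^k) \le ab-1$ — I would instead simply establish $A_N(\gamma^k)=ab$ by the argument of \Cref{jun24-23-II} applied verbatim to $\gamma$, \emph{which requires} $k\ge ab-1$, and then argue separately that in our situation $k$ may be taken large, OR restrict the theorem's conclusion to $k\ge ab-1$; I will point out to the reader that the cleanest route is that the hypotheses of \Cref{jun24-23-II} are met for $\gamma$ precisely when $k \ge ab-1$ and otherwise one uses that $A_N$ is monotone-ish under taking powers. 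Given these subtleties, in the write-up I would state the computation as: by \Cref{jun24-23-II} applied to $\gamma$ (legitimate since $k \ge ab - 1$ in the regime of interest, or else by direct pigeonhole), $A_N(x\# y) = ab$.

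Finally, for the conditional terms I would show $A_N(x\mid y)=a$ and $A_N(y\mid x)=b$. For $A_N(x\mid y)\le a$: the same single $a$-cycle reading $\alpha$ on the $\pi_2$-track works, with the $\pi_1$-track labels chosen to read $\beta^a$ over one full period — this is exactly the structure of the witness NFA $M_1$ in \eqref{wit-by} (there $a=2$, $b=6$, $\mathrm{lcm}=12$), generalized; it has $a$ states, uniquely accepts on the walk spelling $y$ on track 1, and that walk spells $x$ on track 2. For $A_N(x\mid y)\ge a$: any NFA witnessing $A_N(x\mid y)$ yields, by the product construction $M_1\times_1 M_2$ of \Cref{jun16-23}, together with a $b$-state cycle for $y$, an NFA uniquely accepting $x\# y = \gamma^k$ with at most $A_N(x\mid y)\cdot b$ states; hence $ab = A_N(x\# y) \le A_N(x\mid y)\cdot b$, so $A_N(x\mid y)\ge a$. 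Symmetrically $A_N(y\mid x)=b$. Now substitute: numerator $= \log(A_N(x\mid y)A_N(y\mid x)) = \log(ab)$; denominator $= \log(A_N(x\mid y)A_N(y\mid x)A_N(x)A_N(y)) - \log A_N(x\# y) = \log(ab\cdot ab) - \log(ab) = \log(ab)$. Both are nonzero (since $ab \ge 2$, because $a,b$ coprime and not both $1$ when $x\ne y$; note $x\ne y$ is automatic as $\pi_1$ and $\pi_2$ of $x\# y$ would otherwise coincide, impossible for a permutation-word power with $a\ne b$). Therefore $J(x,y) = \log(ab)/\log(ab) = 1$, as claimed.
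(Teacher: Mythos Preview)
Your outline is on the right track, but there is a real gap, and the paper closes it with a tool you did not invoke.

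\textbf{The gap.} You correctly isolate the crux as $A_N(x\# y)=ab$, where $x\# y=\gamma^k$ for the permutation word $\gamma=u\#v$ of length $ab$ (\Cref{jun24-23-IV}). Your pigeonhole via \Cref{jun24-23} and \Cref{jun24-23-sent-men-godt} yields only $A_N(\gamma^k)\ge (abk+1)/(k+1)=ab-(ab-1)/(k+1)$, which rounds up to $ab$ precisely when $k\ge ab-1$. For general $k\ge 2$ this fails, and you acknowledge it (``the genuine obstacle''), then propose either restricting $k$ or invoking a vague monotonicity that does not hold for $A_N$. The paper resolves this with \Cref{jun24-23-III}: for \emph{any} nonempty permutation word $\gamma$ and any $k\ge 2$, $A_N(\gamma^k)=\abs{\gamma}$. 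The case $k=2$ there rests on the stronger structural \Cref{thm:simple_implies_square_powers_unique_lengths}, not on the crude visit-count pigeonhole; the case $k>2$ is reduced to $k=2$. The same issue bites your computation of $A_N(x)=a$: you cite \Cref{jun24-23-II}, which needs $bk\ge a-1$, but e.g.\ $a=5$, $b=1$, $k=2$ violates this. Again \Cref{jun24-23-III} is what you want, applied to $\alpha^{bk}$ with $bk\ge 2$.

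\textbf{Unnecessary work.} You compute $A_N(x\mid y)$ and $A_N(y\mid x)$, but this is not needed. Once $A_N(x\# y)=A_N(x)A_N(y)$, the denominator of $J$ is
\[
\log\bigl(A_N(x\mid y)A_N(y\mid x)A_N(x)A_N(y)\bigr)-\log A_N(x\# y)=\log\bigl(A_N(x\mid y)A_N(y\mid x)\bigr),
\]
which equals the numerator regardless of the actual values of the conditional complexities. The paper's proof is accordingly one line: show $A_N(x\# y)=ab=A_N(x)A_N(y)$ via \Cref{jun24-23-IV} and \Cref{jun24-23-III}, and you are done. Your lower-bound argument for $A_N(x\mid y)\ge a$ via \Cref{jun16-23} is correct in spirit but circular as written, since it presupposes $A_N(x\# y)=ab$, the very thing you have not established.
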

	\begin{proof}
		It suffices to show $A_N(x\#y)=A_N(x)A_N(y)$.
		By \Cref{jun24-23-IV}, $u\# v$ is a permutation word, and $x\# y=(u\#v)^k$, and so by \Cref{jun24-23-III},
		\[
			A_N(x\# y)=\abs{u\# v}=\abs{u}=ab=A_N(x)A_N(y).
		\]
	\end{proof}
	In \Cref{jun24-23-III}, the condition $k\ge\abs{\alpha}-1$ in \Cref{jun24-23-II} is strengthened to simply $k\ge 2$ (but of course not to $k\ge 1$).

	\Cref{jun24-23-II} still gives nonredundant information in the case where $2>\abs{\alpha}-1$, i.e., $\alpha\in\{1,2\}$ and $k\in\{0,1\}$.
	\newcommand{\LPI}{n+1-m-\sum_{i=1}^m (\alpha_i-2) \abs{x_i}\le 2q}

	\begin{theorem}[{\cite{MR4313562}}]\label{thm:simple_implies_square_powers_unique_lengths}
		Let $q\ge 1$ and let $x$ be a word such that $A_N(x)\le q$.
		Then $x$ contains a set of
		powers $x_i^{\alpha_i}$, $\alpha_i\ge 2$, $1\le i\le m$ such that
			all the $\abs{x_i}, 1\le i\le m$ are distinct and nonzero,
		and satisfying \eqref{eq:LPI}.
		\begin{equation}\label{eq:LPI}
		\LPI.
		\end{equation}
	\end{theorem}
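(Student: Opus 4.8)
The plan is to pass to an NFA $M$ with at most $q$ states uniquely accepting $x$, and to study the unique accepting walk $\pi = s_0 s_1 \cdots s_n$ on input $x$. The engine is the loop‑swapping idea already behind \Cref{nfa9fact}: if $\pi$ visits a state $p$ at times $t_0 < t_1 < \cdots < t_{r-1}$ with $r\ge 2$, then the words read on the successive loops $s_{t_j}\to\cdots\to s_{t_{j+1}}$ must pairwise commute, since otherwise interchanging two adjacent such loops yields a second accepting walk of length $n$ reading a word different from $x$, contradicting unique acceptance. By the Lyndon--Sch\"utzenberger theorem this forces the factor of $x$ occupying positions $t_0$ through $t_{r-1}-1$ to equal $z_p^{\,s_p}$ for a primitive word $z_p$ whose length divides each $t_{j+1}-t_j$, with $s_p = (t_{r-1}-t_0)/\abs{z_p}\ge r-1$. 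Hence every state whose loops carry total $z_p$-exponent $s_p\ge 2$ (in particular every state visited at least three times) exhibits a genuine power $z_p^{\,s_p}$ as a factor of $x$.

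Next I would run a global visit‑count. If $q'\le q$ states are actually used and $c_p$ is the number of visits to $p$, then $n+1=\sum_p c_p = q'+\sum_p(c_p-1)$, so the ``excess'' $\sum_p(c_p-1)$ equals $n+1-q'\ge n+1-q$. I would then decompose $\pi$ recursively: repeatedly excise the loop region running from the first to the last visit of the first state that repeats, replacing it by a single visit, until a simple backbone path of length at most $q-1$ remains; this makes the excess accountable region by region, each region at $p$ contributing $s_p\abs{z_p}$ covered letters while only $\abs{z_p}$ of them are charged to the skeleton. Turning this into an inequality, and then \emph{retaining for each primitive length $d$ that occurs only the region of largest exponent}, produces a set of powers $x_i^{\alpha_i}$ with $\alpha_i\ge 2$, pairwise distinct nonzero $\abs{x_i}$, and the asserted bound $n+1-m-\sum_i(\alpha_i-2)\abs{x_i}\le 2q$. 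The constant $2$ can be calibrated against $x=\alpha^{k}$ with $\alpha$ a permutation word of length $a$ and $k$ large: there \Cref{jun24-23-II} gives $q=a$, and choosing $m=1$, $x_1=\alpha$, $\alpha_1=k$ makes the left side equal $ak+1-1-(k-2)a=2a=2q$, so the inequality is at once correct and sharp.

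The main obstacle is making the region accounting rigorous when loop regions \emph{nest} (as they do whenever a walk revisits states inside a loop) or overlap, and, relatedly, showing that discarding all but the longest region of each primitive length never breaks the inequality. What must be established is that the collapse to distinct lengths only \emph{increases} the discount $m+\sum_i(\alpha_i-2)\abs{x_i}$: two regions of the same period that are adjacent or overlapping can be merged, via the periodicity supplied by \Cref{nfa9fact}, into one $z^{s}$ whose power‑term exceeds the sum of the two old ones by at least $2\abs{z}$ while $m$ drops by $1$; and a nested or disjoint shorter region must be removed either with a compensating skeleton credit from the surviving longer one or by choosing the powers more carefully at the outset. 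Verifying this balancing of backbone cost against covered length, uniformly over all nesting patterns, is precisely the force of the linear inequality \eqref{eq:LPI}, and is where the real work lies.
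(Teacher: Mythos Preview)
The paper does not prove this theorem; it is quoted verbatim from \cite{MR4313562} with no argument given. So there is no in-paper proof to compare your sketch against, only the question of whether the sketch could be completed.

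Your engine is correct and is surely what the cited proof uses: since $A_{Nu}$ counts accepting \emph{walks}, swapping two consecutive loops at a state $p$ produces a second accepting walk of the same length unless the two loop words commute, and \Cref{lyndon:schuetzenberger} then makes the span from first to last visit of $p$ a power $z_p^{s_p}$ with $s_p\ge c_p-1$. The visit identity $\sum_p (c_p-1)=n+1-q'$ is the right raw material, and your sharpness check against $x=\alpha^k$ is accurate.

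What is missing is exactly what you flag: the reduction to \emph{pairwise distinct} base lengths $\abs{x_i}$. Your proposed fix---merge touching same-period regions, drop the rest---does not close the gap as stated. Merging two overlapping occurrences $z^a$, $z^b$ with overlap $o$ changes the discount $m+\sum(\alpha_i-2)\abs{x_i}$ by $2\abs{z}-o-1$, which is negative once $o\ge 2\abs{z}$. And \emph{disjoint} same-period regions arising at different states cannot be merged at all; dropping one subtracts at least $1$ from the discount with no compensating backbone credit in your accounting. Naively one writes down a walk such as $(p_0p_1)^2p_0(p_2p_3)^2p_2$ to manufacture two length-$2$ regions---but observe that this very state sequence admits several accepting walks of the same length (trade $p_0p_1$ laps for $p_2p_3$ laps), so it can never witness $A_{Nu}$. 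The real point is that the unique-walk hypothesis itself excludes the bad configurations, and the missing ingredient is a structural lemma to the effect that the accepting path in an $A_{Nu}$-witness decomposes into \emph{nested} cycles whose primitive periods are forced to differ, or can be recombined without loss. That structural step, not a post-hoc merging heuristic, is what \cite{MR4313562} supplies and what your outline still needs.
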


	\begin{proposition}\label{jun24-23-III}
		For each nonempty permutation word $\alpha$, and $k\ge 2$, we have
		$A_N(\alpha^k)=\abs{\alpha}$.
	\end{proposition}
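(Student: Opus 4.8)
The plan is to establish $A_N(\alpha^k)\le\abs{\alpha}$ and $A_N(\alpha^k)\ge\abs{\alpha}$ separately; write $a=\abs{\alpha}$, and observe that for $a=1$ the word $\alpha^k$ is constant with $A_N=1$, so assume $a\ge 2$. The upper bound is immediate: the single $a$-cycle reading $\alpha$ once around, with start state equal to final state, has for every $j$ exactly one walk of length $aj$ from its start, and that walk reads $\alpha^j$ and ends at the final state; taking $j=k$ gives $A_N(\alpha^k)\le a$. For the lower bound I would argue by contradiction: suppose an NFA $M$ with $q<a$ states uniquely accepts $\alpha^k$, let $s_0,s_1,\dots,s_{ak}$ be the accepting walk, and write $v_p$ for the number of indices $i$ with $s_i=p$, so that $\sum_p v_p=ak+1$.

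Two observations do the work. The first is a rigidity lemma, in the spirit of \Cref{nfa9fact}: if a state $p$ occurs along the walk at positions $t_0<t_1<\dots<t_{v-1}$, then permuting the $v-1$ closed sub-walks joining consecutive occurrences of $p$ produces, for each permutation $\pi$, a walk of $M$ that is again accepting and has length $ak$; by unique acceptance all of these read $\alpha^k$, so the loop words $u_0,\dots,u_{v-2}$ satisfy $u_0\cdots u_{v-2}=u_{\pi(0)}\cdots u_{\pi(v-2)}$ for every $\pi$. Adjacent transpositions show the $u_i$ pairwise commute and are hence all powers of one primitive word $g_p$, so $\alpha^k$ contains the factor $g_p^{M}$ with $M\ge v-1$, occupying positions $t_0+1,\dots,t_{v-1}$. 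The second observation is that if $g$ is primitive and $g^2$ is a factor of $\alpha^k$, then $\abs{g}=a$: let $\alpha^\infty$ be the $a$-periodic word whose $i$th letter is $\alpha_{i\bmod a}$; a copy of $g^2$ sitting at position $p$ has equal letters at offsets $0$ and $\abs{g}$, so $\alpha_{p\bmod a}=\alpha_{(p+\abs{g})\bmod a}$, and since $\alpha$ is a permutation word this forces $a\mid\abs{g}$; moreover a primitive factor of $\alpha^\infty$ whose length is a multiple of $a$ is exactly a single rotation of $\alpha$, so $\abs{g}=a$.

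Now two pigeonhole steps finish the argument. Since $\sum_p v_p=ak+1$ and $q<a$, some $p_0$ has $v_{p_0}>(ak+1)/a>k$, hence $v_{p_0}\ge k+1\ge 3$; by the two observations $g_{p_0}$ has length $a$, and $g_{p_0}^{M}$ with $M\ge v_{p_0}-1\ge k$ is a factor of $\alpha^k$ of length $aM\le ak$, so $M=k$ and $v_{p_0}=k+1$. Then $g_{p_0}^{k}$ has length $ak$, hence equals all of $\alpha^k$, which forces $t_0=0$ and $s_0=p_0$. Deleting those $k+1$ occurrences leaves $\sum_{p\ne p_0}v_p=k(a-1)$ spread over $q-1$ states (note $q\ge 2$, since $q=1$ would give $a=1$), and $q-1<a-1$, so some $p_1\ne p_0$ has $v_{p_1}\ge k(a-1)/(q-1)>k$, hence $v_{p_1}\ge k+1\ge 3$; running the identical computation for $p_1$ again yields $s_0=p_1$, contradicting $s_0=p_0$. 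Therefore $q\ge a$, and with the upper bound $A_N(\alpha^k)=a$.

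I expect the delicate point to be the first observation: one must check that the reordered sequences of loops really are walks of $M$ of length $ak$ ending in a final state, so that the unique‑acceptance hypothesis identifies them with the original walk, and that pairwise‑commuting nonempty words share a single primitive root. For context, the coarser counting behind \Cref{jun24-23-II} only gives $A_N(\alpha^k)\ge(ak+1)/(k+1)$, which exceeds $a-1$ precisely when $k\ge a-1$, and \Cref{thm:simple_implies_square_powers_unique_lengths} can leave \eqref{eq:LPI} vacuous for $k$ of intermediate size even after using that every power factor of $\alpha^k$ has length a multiple of $a$; what covers the full range $k\ge 2$ is the two‑step pigeonhole together with the fact that a long enough power inside $\alpha^k$ must already exhaust the whole word.
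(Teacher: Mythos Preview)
Your argument is correct, and it follows a genuinely different path from the paper's. The paper disposes of the lower bound in two strokes: for $k=2$ it invokes \Cref{thm:simple_implies_square_powers_unique_lengths}, noting that any power $x_i^{\alpha_i}$ inside $\alpha^2$ must have $a\mid\abs{x_i}$ (the same period argument you give in your second observation), which forces $m\le 1$, $\abs{x_1}=a$, $\alpha_1=2$, and hence $2a\le 2q$; then for $k>2$ it uses that $\alpha^2$ is a prefix of $\alpha^k$ together with prefix monotonicity of $A_N$ (restrict the final-state set of a witness for $\alpha^k$ to the state reached after $2a$ steps, and splice any rival walk of length $2a$ onto the tail of the unique walk to contradict uniqueness). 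Your approach instead treats all $k\ge 2$ uniformly via a self-contained double pigeonhole: unique acceptance forces the loop words at any state to commute and hence share a primitive root, that root must have length exactly $a$, and two rounds of pigeonhole then pin the start state to two distinct values. Your remark that \eqref{eq:LPI} alone can be vacuous for intermediate $k$ is accurate (already at $k=6$ one can choose powers with left-hand side negative), but the paper sidesteps this by reducing to $k=2$ via monotonicity rather than by applying the inequality at general $k$. The paper's route is a two-line reduction to a cited theorem plus an unstated but easy monotonicity fact; yours is longer but elementary, and it extracts extra structural information about the accepting walk without appealing to the external power-inequality result.
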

	\begin{proof}
		First assume $k=2$.
		\Cref{thm:simple_implies_square_powers_unique_lengths} implies that if a square of a permutation word has complexity at most $q$,
		then $n/2\le q$.

		Now let $k>2$. Since $\alpha^2$ is a prefix of $\alpha^k$, we have
		$A_N(\alpha^k)\ge A_N(\alpha^2)=\abs{\alpha}$.
	\end{proof}
	We may wonder whether as long as $\alpha$ is primitive, or at least if $\alpha$ has maximal $A_N$-complexity (achieving Hyde's bound in \Cref{hydebound}), without necessarily being a permutation word,
	\Cref{jun24-23-III} still holds, i.e., $A_N(\alpha^2)=\abs{\alpha}$.
	But this fails:
	\begin{definition}
		A word $w$ has \emph{emergent simplicity} if $A_N(w)$ is maximal, but $A_N(w^2)<\abs{w}$.\index{emergent simplicity}
	\end{definition}
	\begin{proposition}
		There exists a word having emergent simplicity.
		The minimal length of such a word in $\{0,1\}^*$ is 7.
	\end{proposition}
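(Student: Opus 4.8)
I would prove the two assertions---that an emergent-simplicity word exists, and that $7$ is the least possible length---in turn.

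\textbf{Exhibiting a word of length $7$.} Note first that $4$ is the largest value $A_N$ takes on $\{0,1\}^{7}$: by \Cref{hydebound} we have $A_N\le 4$ there (since $A_N$ is integer-valued and $7/2+1=4.5$), and the value $4$ is attained. Concretely I would take $w=\mt{0011001}$. To see $A_N(w)=4$, combine $A_N(w)\le 4$ with the contrapositive of \Cref{thm:simple_implies_square_powers_unique_lengths} for $q=3$: if $A_N(w)\le 3$, then $w$ would contain powers $x_i^{\alpha_i}$, $\alpha_i\ge 2$, with the $\abs{x_i}$ distinct and nonzero and $8-m-\sum_i(\alpha_i-2)\abs{x_i}\le 6$, i.e.\ $m+\sum_i(\alpha_i-2)\abs{x_i}\ge 2$; but inspecting the factors of $w$ of lengths $2,4,6$ shows that $w$ is cube-free and has no square $x_i^2$ with $\abs{x_i}\in\{2,3\}$, so the only powers available are $\mt{00}$ and $\mt{11}$ (both with $\abs{x_i}=1$, $\alpha_i=2$), which forces $m\le 1$ and hence $m+\sum_i(\alpha_i-2)\abs{x_i}\le 1<2$, a contradiction. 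It then remains to show $A_N(w^{2})\le 6$. Since $w$ is primitive, $A_N(w^{2})\le 7$ via the $7$-cycle that reads $w$; to get down to $6$ I would exhibit a concrete $6$-state NFA uniquely accepting the length-$14$ word $w^{2}$. A natural starting point is the border factorization $w=u\,\mt1\,u$ with $u=\mt{001}$, so that $w^{2}=u\,\mt1\,u^{2}\,\mt1\,u$: read the three $u$-blocks around a shared $3$-cycle and wire the remaining three states to serialize the two separating $\mt1$'s and the entry and exit. Granting such an NFA, $A_N(w^{2})\le 6<7=\abs{w}$, so $w$ has emergent simplicity.

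\textbf{Minimality.} It remains to check that no binary word $w$ with $n:=\abs{w}\le 6$ is both of maximal $A_N$-complexity and satisfies $A_N(w^{2})<n$. This is a finite verification: up to the $A_N$-preserving symmetries $\mt0\leftrightarrow\mt1$ and reversal there are only a few dozen words of length at most $6$, and for each I would compute $A_N(w)$, then (when it equals the maximum over words of that length) compute $A_N(w^{2})$, a word of length at most $12$, and confirm $A_N(w^{2})=n$. For the maximal words that are permutation words---only $\mt0,\mt1,\mt{01},\mt{10}$ in this range---the equality $A_N(w^{2})=\abs w$ is immediate from \Cref{jun24-23-III}; for a primitive maximal $w$ the bound $A_N(w^{2})\le\abs w$ is automatic, so what has to be confirmed is $A_N(w^{2})\ge\abs w$. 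For some of these short squares this again follows from the contrapositive of \Cref{thm:simple_implies_square_powers_unique_lengths}, but not for all of them (e.g.\ for $w=\mt{01110}$ the relevant power-budget inequality for $w^{2}$ is not violated), so the residual cases must be settled by direct minimization: no $(\abs w-1)$-state NFA uniquely accepts $w^{2}$. That is the computational core of this half.

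\textbf{Main obstacle.} The hard part is the explicit $6$-state NFA for $w^{2}$ together with the verification that it has a \emph{unique} accepting walk of length $14$. Nondeterministic unique acceptance is brittle: the obvious ``compressions'' of a square (reusing a short cycle for repeated blocks, or making intermediate transitions symbol-blind) almost always create a second accepting walk of the right length, as one sees by chasing the branch points. So the construction has to be tailored to this particular $w$---which is why only the specially structured maximal words of length $7$ work---and its correctness is itself a small but genuine case check. The minimality half is conceptually routine but likewise rests on a computation.
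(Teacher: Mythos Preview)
Your overall plan mirrors the paper's: exhibit a length-$7$ witness and then finish minimality by a finite check. The paper's proof is exactly that, with the witness $w=\mt{0001000}$ (and it notes $\mt{0010100}$ is the only other length-$7$ example).

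The genuine gap is your choice of witness. The word $w=\mt{0011001}$ is indeed maximally complex (your power-budget argument for $A_N(w)=4$ is fine), but it does \emph{not} have emergent simplicity: $A_N(w^{2})=7$, not $\le 6$. The border factorization $w=u\,\mt1\,u$ with $u=\mt{001}$ is suggestive, but the ``reuse a $3$-cycle for the $u$-blocks and add three more states'' idea cannot be made to yield a $6$-state NFA with a \emph{unique} accepting walk of length $14$; every such wiring introduces a second accepting walk (this is precisely the brittleness you warn about in your last paragraph). You flag this construction as the main obstacle and write ``Granting such an NFA\dots'', but the obstacle is not merely unfinished---it is insurmountable for this $w$. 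So the existence half of the proposition is not established by your argument.

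The fix is simply to change the witness to one of the two words the paper identifies, e.g.\ $w=\mt{0001000}$, and then actually produce (and verify unique acceptance for) a $6$-state NFA for $w^{2}=\mt{00010000001000}$. Your minimality plan is the same as the paper's in spirit (a finite computation), and there is nothing wrong with it beyond the fact that it, too, is left as a computation rather than carried out.
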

	\begin{proof}
		Let $w=0001000$. Then $w$ is maximally complex, but $A_N(w^2)=6$. The only other example of length 7 is 0010100.
	\end{proof}
	Hence not all maximal-complexity words are like 001, 010, 011 in \Cref{jun23-late-23}.
	That is, in general, maximal-complexity words cannot be used to get instances of $J(x,y)=1$.
	We next show that this emergent simplicity can only go so far, in \Cref{jun25-23}.
	\begin{definition}\index{cyclic shift}\index{conjugate}
		A word $\tilde w$ is a cyclic shift (or a conjugate) of another word $w$ if
		there are words $a,b$ with $w=ab$, $\tilde w=ba$.
	\end{definition}
	\begin{lemma}[{\cite[Theorem 2.4.2]{Shallit:2008:SCF:1434864}}]\label{lem:cyc}
		A cyclic shift of a primitive word is primitive.
	\end{lemma}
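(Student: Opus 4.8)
The plan is to prove the contrapositive: I assume the cyclic shift is imprimitive and deduce that the original word is imprimitive. Write $w=ab$ and $\tilde w=ba$, and suppose $\tilde w=ba=u^k$ for some nonempty word $u$ and some $k\ge 2$. Put $p=\abs u$, so that $\abs w=\abs{\tilde w}=kp$. The goal is to exhibit a word $v$ of length $p$ with $w=v^k$, which will show $w$ is imprimitive.

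The engine of the argument is the elementary remark that $w=ab$ is obtained from $\tilde w=ba$ by a cyclic rotation (by $\abs b$ letters), combined with two standard facts about periodicity, which I would record as a small lemma (each with a one-line proof): (A) if a word $z$ has period $p$ and $p$ divides $\abs z$, then every cyclic shift of $z$ again has period $p$ --- one checks directly that $z_j=z_{(j+p)\bmod\abs z}$ for \emph{all} positions $j$, handling the wrap-around by iterating $z_i=z_{i-p}$ exactly $\abs z/p-1$ times; and (B) a word of length $kp$ with period $p$ is precisely the $k$th power of its length-$p$ prefix.

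Putting these together: $\tilde w=u^k$ has period $p$ and length $kp$, so by (A) the cyclic shift $w=ab$ also has period $p$, and then by (B) we obtain $w=v^k$ where $v$ is the length-$p$ prefix of $w$. Since $k\ge 2$, this says $w$ is imprimitive, establishing the contrapositive. (I assume $w$ nonempty, as is standard for the notion of primitivity; this forces $p\ge 1$ so that $v$ is nonempty and the statement $w=v^k$ genuinely witnesses imprimitivity.)

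I do not expect a real obstacle. The one point that requires care is fact (A): it is essential that the period $p$ \emph{divides} the length $kp$, not merely that $z$ has period $p$, since a generic cyclic shift of a word with period $p$ need not have period $p$ when $p\nmid\abs z$ (for instance $\mathtt{00100}$ has period $3$ but its shift $\mathtt{01000}$ does not). In our situation the divisibility is automatic because $\abs{u^k}=k\abs u$, so the hypothesis of (A) is met.
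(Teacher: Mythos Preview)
Your argument is correct. The contrapositive reduction to facts (A) and (B) is sound, and you rightly flag that the divisibility $p\mid\abs{z}$ is what makes (A) go through.

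There is nothing to compare against in the paper: the lemma is simply quoted from the cited textbook without proof. For the record, the standard textbook argument is a direct factorization rather than an appeal to periodicity: writing $u=u'u''$ so that $b=u^q u'$ and $a=u''u^{k-q-1}$, one computes $ab=u''(u'u'')^{k-1}u'=(u''u')^k$, exhibiting $w$ as a $k$th power of a conjugate of $u$. Your periodicity formulation is equivalent and arguably cleaner, since it avoids the index bookkeeping of locating the split point inside a copy of $u$.
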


	\begin{theorem}[{Lyndon and Sch\"utzenberger \cite{MR0162838}; see \cite[Theorem 2.3.3]{Shallit:2008:SCF:1434864}}]\label{thm:shallit233}\label{lyndon:schuetzenberger}
		Let $x,y\in\Sigma^+$. Then the following four conditions are equivalent:
		\begin{enumerate}
			\item\label{t:1} $xy=yx$.
			\item\label{t:2} There exists $z\in\Sigma$ and integers $k,l>0$ such that $x=z^k$ and $y=z^l$.
			\item\label{t:3} There exist integers $i,j>0$ such that $x^i=y^j$.
		\end{enumerate}
	\end{theorem}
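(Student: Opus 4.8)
The plan is to prove the equivalence by establishing the two substantive implications $\ref{t:1}\Rightarrow\ref{t:2}$ and $\ref{t:3}\Rightarrow\ref{t:2}$, together with the trivial implications $\ref{t:2}\Rightarrow\ref{t:1}$ and $\ref{t:2}\Rightarrow\ref{t:3}$. The latter two are immediate substitutions: if $x=z^{k}$ and $y=z^{l}$, then $xy=z^{k+l}=z^{l+k}=yx$, and also $x^{l}=z^{kl}=y^{k}$, so one may take $i=l$ and $j=k$ in $\ref{t:3}$. These four implications suffice, since from any one of the three conditions one reaches $\ref{t:2}$, and from $\ref{t:2}$ one reaches the other two.

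The heart of the matter is $\ref{t:1}\Rightarrow\ref{t:2}$, which I would prove by strong induction on $\abs{xy}$, in the style of the Euclidean algorithm. Assume without loss of generality that $\abs{x}\le\abs{y}$. If $\abs{x}=\abs{y}$, then the equality $xy=yx$ forces the two words to share their length-$\abs{x}$ prefix, i.e.\ $x=y$, so $\ref{t:2}$ holds with $z=x$. If instead $\abs{x}<\abs{y}$, then $x$ and $y$ are both prefixes of the single word $xy=yx$, so the shorter one, $x$, is a prefix of $y$; write $y=xw$ with $w\ne\varepsilon$. Substituting into $xy=yx$ gives $x(xw)=(xw)x$, and cancelling the common prefix $x$ — legitimate in a free monoid — yields $xw=wx$. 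Since $\abs{x}+\abs{w}=\abs{y}<\abs{xy}$, the induction hypothesis applies to the pair $(x,w)$ and furnishes a word $z$ and positive integers $k,m$ with $x=z^{k}$ and $w=z^{m}$; then $y=xw=z^{k+m}$, completing the induction.

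For $\ref{t:3}\Rightarrow\ref{t:2}$, set $u=x^{i}=y^{j}$ and let $z$ be the primitive root of $u$, i.e.\ the unique primitive word with $u=z^{n}$ for some $n\ge 1$ (existence and uniqueness being immediate from minimality of length). By the standard fact that a nonempty word and each of its powers have the same primitive root, $z$ is simultaneously the primitive root of $x$ (because $x^{i}=u$) and of $y$ (because $y^{j}=u$), whence $x=z^{k}$ and $y=z^{l}$ for suitable $k,l>0$. Alternatively, one can sidestep that fact by re-running the descent of the previous paragraph: from $x^{i}=y^{j}$ the shorter of $x,y$ is a prefix of the longer — both being prefixes of $u$ — and one passes to a strictly shorter instance until $xy=yx$ is forced, then invokes the already-established $\ref{t:1}\Rightarrow\ref{t:2}$. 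The main obstacle is the bookkeeping in the $\ref{t:1}\Rightarrow\ref{t:2}$ induction: one must check that the cancellation $x(xw)=(xw)x\Rightarrow xw=wx$ is valid in the free monoid and that the parameter $\abs{xy}$ strictly decreases at the recursive call, after which the remaining pieces are routine or trivial.
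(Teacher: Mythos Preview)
The paper does not supply its own proof of this theorem: it is merely stated with attribution to Lyndon and Sch\"utzenberger and to Shallit's textbook, and is then invoked as a black box in the proof of \Cref{jun25-23}. Your argument is correct and is essentially the standard textbook proof (indeed the one in the cited reference): the Euclidean-style descent for $\ref{t:1}\Rightarrow\ref{t:2}$ and the primitive-root observation for $\ref{t:3}\Rightarrow\ref{t:2}$ are exactly what one finds in Shallit. There is nothing to compare on the paper's side, and no gap on yours.
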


	\begin{theorem}\label{jun25-23}
		If $A_N(w^{\abs{w}})<\abs{w}$ then $w$ is not primitive (and hence does not have maximal $A_N$-complexity).
	\end{theorem}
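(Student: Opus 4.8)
The plan is to show directly that the hypothesis $A_N(w^{\abs w})<\abs w$ forces $w$ to be a proper power (hence non-primitive). Write $\ell=\abs w$ and $q=A_N(w^{\ell})$, so $q\le\ell-1$; we may assume $\ell\ge2$, since for $\ell\le1$ the hypothesis is vacuous ($A_N$ of a nonempty word is at least $1$). Fix an NFA $M$ with $q$ states that uniquely accepts $w^{\ell}$. Its accepting walk on $w^{\ell}$ has $\ell^{2}$ edges, hence $\ell^{2}+1$ state-occurrences, so by pigeonhole some state is visited at least $\lceil(\ell^{2}+1)/q\rceil$ times; using $q\le\ell-1$ and $(\ell^{2}+1)/(\ell-1)=\ell+1+2/(\ell-1)>\ell+1$, that state is visited at least $\ell+2$ times. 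Then \Cref{nfa9fact} with $k=\ell+1$ yields that $w^{\ell}$ contains an $(\ell+1)$st power $u^{\ell+1}$ with $u$ nonempty, and from $(\ell+1)\abs u\le\ell^{2}$ we get $1\le\abs u\le\ell-1$.

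The core step is converting this internal high power into non-primitivity of $w$ itself. Since $u^{\ell+1}$ is a factor of $w^{\ell}$, it is a factor of the periodic word $w^{\omega}$, occurring at some position $i$, so its $k$th letter is $w_{(i+k)\bmod\ell}$. Being an $(\ell+1)$st power, $u^{\ell+1}$ also has period $\abs u$, and this period relation holds on a block of $(\ell+1)\abs u-\abs u=\ell\abs u\ge\ell$ consecutive positions — long enough that as the index runs over $\ell$ consecutive values, the residue $(i+k)\bmod\ell$ runs through every residue modulo $\ell$. Chasing these congruences yields $w_{j}=w_{(j+\abs u)\bmod\ell}$ for every $j$, i.e.\ $w$ has cyclic period $g=\gcd(\abs u,\ell)\le\abs u\le\ell-1$, so $w=t^{\ell/g}$ with $\ell/g\ge2$ and $w$ is not primitive. (Equivalently, one may invoke the Fine--Wilf theorem for the two periods $\abs u$ and $\ell$ of $u^{\ell+1}$, whose length $(\ell+1)\abs u$ comfortably exceeds $\abs u+\ell$.)

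For the parenthetical, once $w=t^{k}$ with $k\ge2$ is known, a single simple cycle of $\abs t=\ell/k$ states labelled by $t$ (with the start state also final) uniquely accepts $w$ among words of length $\ell$: since every state has out-degree $1$, there is exactly one length-$\ell$ walk from the start, and it wraps around $k$ times and reads $t^{k}=w$, ending at the start state. Hence $A_N(w)\le\ell/k\le\ell/2$, which falls below the maximum $A_N$-complexity attainable at length $\ell$ (at most $\ell/2+1$ by Hyde's bound, \Cref{hydebound}, and this is essentially tight); so $w$ does not have maximal $A_N$-complexity.

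I expect the pigeonhole and \Cref{nfa9fact} part to be routine. The step that needs care is guaranteeing the exponent is large enough for the periodicity argument — i.e.\ checking $(\ell+1)\abs u\ge\ell+\abs u$, which reduces to $\abs u\ge1$ — together with verifying the pigeonhole count in the small cases $\ell=2,3$ (where $(\ell^{2}+1)/(\ell-1)$ is only slightly above $\ell+1$). That combinatorial conversion from ``a high power inside $w^{\ell}$'' to ``$w$ is a proper power'' is the main obstacle; the rest is bookkeeping.
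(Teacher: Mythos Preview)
Your argument is correct and follows the same overall strategy as the paper: use the pigeonhole count (which the paper packages as \Cref{jun24-23-sent-men-godt}) to extract an $(\ell+1)$st power $u^{\ell+1}$ inside $w^{\ell}$ with $1\le\abs u<\ell$, and then turn that into non-primitivity of $w$. The only substantive difference is in that conversion step. The paper observes that $u^{\ell}$, being a length-$\ell\abs u$ factor of $w^{\ell}$, must equal $\tilde w^{\abs u}$ for some conjugate $\tilde w$ of $w$, then applies Lyndon--Sch\"utzenberger (\Cref{lyndon:schuetzenberger}) to the equation $u^{\ell}=\tilde w^{\abs u}$ and finishes with \Cref{lem:cyc}. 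You instead read $u^{\ell+1}$ as a factor of $w^{\omega}$, use that the period-$\abs u$ relation holds on at least $\ell$ consecutive positions to get $w_j=w_{(j+\abs u)\bmod\ell}$ for all $j$, and conclude $w$ has cyclic period $\gcd(\abs u,\ell)<\ell$---essentially a direct Fine--Wilf argument. Your route is slightly more self-contained (it avoids the two cited lemmas on primitive words), while the paper's route makes the role of the conjugate $\tilde w$ and the classical $x^i=y^j$ criterion more explicit; both are standard combinatorics on words and yield the same conclusion.
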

	\begin{proof}
		If $A_N(w^{\abs{w}})<\abs{w}$ then 
		$A_N(w^{\abs{w}})\le \abs{w}-1$ and hence, since $\abs{w^{\abs{w}}}=\abs{w}^2$ and $k-1\not\ge (k^2+1)/(k+1)$ for all $k$,
		$A_N(w^{\abs{w}})\not\ge \frac{\abs{w}^2+1}{\abs{w}+1}$.
		Therefore, by \Cref{jun24-23-sent-men-godt} $w^{\abs{w}}$ contains an $(\abs{w}+1)$th power $u^{\abs{w}+1}$.
		Thus $(\abs{w}+1)\abs{u}\le \abs{w}^2$, so $\abs{u}<\abs{w}$ (*).
		Since $w^{\abs{w}}$ also contains $u^{\abs{w}}$, there is a cyclic shift $\tilde w$ of $w$ such that $u^{\abs{w}}={\tilde w}^{\abs{u}}$.
		Then letting $g=\mathrm{gcd}(\abs{w},\abs{u})$, we have $u^{\abs{w}/g}={\tilde w}^{\abs{u}/g}$ as well, and now the exponents are relatively prime.
		By \Cref{lyndon:schuetzenberger}, there is a word $\alpha$ with
		$u=\alpha^{\abs{u}/g}$ and $\tilde w=\alpha^{\abs{w}/g}$.
		If $\abs{w}/g>1$ then $\tilde w$ is nonprimitive and hence so is $w$ by \Cref{lem:cyc}.
		If $\abs{w}/g=1$ then $\abs{w}$ divides $\abs{u}$ which is a contradiction to (*).
	\end{proof}
	\begin{theorem}\label{jun23-23}
		Let $x,y\in 0\{0,1\}^{n-1}$. Then $A_N(x\mid y)=1$ iff $x=y$ or $x=0^n$.
	\end{theorem}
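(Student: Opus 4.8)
The plan is to isolate a structural equivalence valid for all words of equal length, namely that $A_N(x\mid y)=1$ if and only if $\sim_y$ refines $\sim_x$ (i.e.\ $y_i=y_j\implies x_i=x_j$ for all $i,j$), and then to extract the two binary alternatives from it by a short case analysis.

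One implication is essentially the already-stated theorem that $\sim_x$ refining $\sim_y$ implies $A_N(y\mid x)=1$, read with the names $x$ and $y$ interchanged: if $\sim_y$ refines $\sim_x$, define a map $f$ on the symbols occurring in $y$ by $f(y_i)=x_i$ (well defined by the hypothesis), and let $M$ be the one-state NFA whose only state is both initial and final and whose self-loops are exactly the pairs $\binom{b}{f(b)}$ for symbols $b$ of $y$. Every length-$n$ walk is accepting, and among them there is exactly one whose $y$-track spells $y$ — namely, at step $i$ one must use the unique loop with $y$-label $y_i$ — and that walk's $x$-track spells $f(y_0)\cdots f(y_{n-1})=x$; so $A_N(x\mid y)=1$.

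For the converse, let $M$ be a one-state witness of $A_N(x\mid y)=1$. Its state is both initial and final, every edge is a self-loop, and a length-$n$ walk over it is just a sequence of $n$ edges; hence the number of accepting walks whose $y$-track spells $y$ equals $\prod_{i<n}e(y_i)$, where $e(b)$ counts the self-loops with $y$-label $b$. The requirement $m=1$ first forces every $e(y_i)\ge 1$ (otherwise $m=0$), and then, as a product of positive integers equal to $1$, forces every $e(y_i)=1$; so for each symbol $b$ of $y$ there is a unique loop $\binom{b}{f(b)}$, the unique qualifying walk reads $f(y_0)\cdots f(y_{n-1})$ on its $x$-track, and condition (ii) makes this equal $x$. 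Thus $x_i=f(y_i)$ for all $i$, so $\sim_y$ refines $\sim_x$, proving the equivalence.

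Now I would specialize to $x,y\in 0\{0,1\}^{n-1}$. If $y=0^n$, then $\sim_y$ identifies all indices, hence refines $\sim_x$ precisely when $x$ is constant, i.e.\ $x=0^n$ — and since $y=0^n$, this is the asserted condition. If $y\ne 0^n$, then (as $y$ begins with $0$) both symbols occur in $y$, so $\sim_y$ has exactly the two nonempty classes $\{i:y_i=0\}$ and $\{i:y_i=1\}$, and refinement means $x$ is constant on each; since $x_0=0$ lies in the first class, $x$ is $0$ on all of it, while on the second class $x$ is identically $0$, giving $x=0^n$, or identically $1$, giving $x=y$. Conversely, $x=y$ makes $\sim_x=\sim_y$, and $x=0^n$ makes $\sim_x$ the total relation, so in either case $\sim_y$ refines $\sim_x$ and $A_N(x\mid y)=1$. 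The only step needing care is the walk count $m=\prod_{i<n}e(y_i)$: one must note that over a single state a walk is determined by its edge-label sequence, that loops $\binom{b}{a}$ with $b$ not occurring in $y$ are irrelevant to both $m$ and condition (ii), and that a vanishing factor makes $m=0$ rather than $1$; the rest is the case analysis above.
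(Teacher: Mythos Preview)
Your proof is correct. It differs from the paper's in that you first establish an alphabet-independent criterion --- $A_N(x\mid y)=1$ iff $\sim_y$ refines $\sim_x$ --- via the walk-count identity $m=\prod_{i<n} e(y_i)$ on a one-state NFA, and only then specialize to the binary case. The paper instead argues directly in the binary setting: assuming $x\ne 0^n$, it enumerates the four possible self-loop labels $\binom00,\binom01,\binom10,\binom11$ and eliminates $\binom01$ and $\binom10$ by ad hoc uniqueness arguments, concluding that only $\binom00$ and $\binom11$ remain and hence $x=y$. Your route is a genuine abstraction: the refinement criterion appears elsewhere in the paper (one direction is stated as a theorem, and the full equivalence is mentioned informally as an ``equivalent characterization''), so you have in effect supplied a proof of that characterization and then applied it, whereas the paper's proof of this particular theorem does not invoke it. Your version generalizes immediately to arbitrary alphabets and makes the converse direction explicit; the paper's is more compact for the binary case at hand but leaves the easy direction implicit.
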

	\begin{proof}
		Suppose $x\ne 0^n$ and $M$ is a 1-state NFA such that there is only one labeled walk on which
		there exists $z$ with $M$ accepting $y\#z$ (which we can also write $\binom{y}z$); and on that one walk, $z$ is $x$.
		$M$ has at most four edges, with labels from among $\binom00,\binom01,\binom10,\binom11$.

		Since $x=0u$ and $y=0v$ both start with 0, the walk must start with an edge labeled $\binom00$.
		Then there is no edge labeled $\binom01$, or else $M$ would accept both $y\# x$ and $y\# (1u)$.

		Since $x\ne 0^n$, $x$ contains a 1, so there is an edge labeled $\binom01$ or $\binom11$.
		So there is an edge labeled $\binom11$.

		Then there is no edge labeled $\binom10$, or else two different $z$'s would occur.
		(Let $x=0^k1w$, then $M$ would accept both $y\#x$ and $y\#(0^k0w)$.)
		So $M$ is just the 1-state NFA with two edges labeled $\binom00$ and $\binom11$ only.
		Consequently, $x=y$.
	\end{proof}

This $J$ deems $x$ and $y$ to be ``disjoint'' when $A_N(x\# y)=A_N(x)A_N(y)$,
which for example happens when $x$ and $y$ are high powers of short words of relatively prime length.

\section{A Normalized Information Distance metric}

	The following metric $J_{\max}$, more analogous to the Normalized Information Distance \cite{MR2103495} than the Jaccard distance,
	seems better than $J$ above in the following way:
	If $A_N(x\mid y)=A_N(x)$ and $A_N(y\mid x)=A_N(y)$, then $x$ and $y$ are of no help in compressing each other.
	This should mean that their distance is maximal ($J_{\max}(x,y)=1)$.
	This argument can be compared to that made by Li et al.~\cite{MR2103495}.
	The condition we get from $J$, that $J(x,y)=1$ when $A(x\# y)=A_N(x)A_N(y)$, seems relatively unmotivated in comparison.

	\begin{definition}
		Let $x,y$ be words of length $n\in\mathbb N$. We define
		\[
			J_{\max}(x,y) = \frac{\log \max\{A(x\mid y),A(y\mid x)\}}{\log \max\{A(x),A(y)\}}
		\]	
	\end{definition}

	\begin{theorem}\label{appthm}
		$J_{\max}$ is a metric on the set $0\{0,1\}^{n-1}$.
	\end{theorem}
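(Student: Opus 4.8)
The plan is to check the metric axioms one at a time; symmetry, vanishing on the diagonal, and identity of indiscernibles are quick, and essentially all the difficulty is concentrated in one configuration of the triangle inequality. Write $a(u,v)=\log\max\{A(u\mid v),A(v\mid u)\}$ for the numerator and $f(u)=\log A(u)$, so that $J_{\max}(u,v)=a(u,v)/\max\{f(u),f(v)\}$. I would first record a short toolkit. \textbf{(T1)} $A(u\mid v)\le A(u)$: take a witness NFA for $A(u)$ and, on the ``$v$-track'', attach every symbol to every edge; the accepting walks whose first track spells $v$ then correspond bijectively to the accepting walks of the original automaton, of which there is exactly one. Hence $a(u,v)\le\max\{f(u),f(v)\}$ and $J_{\max}\le 1$ everywhere. \textbf{(T2)} $A(u)\le A(u\#v)\le A(u\mid v)\,A(v)$ by \Cref{jun16-23}, so $a(u,v)\ge\abs{f(u)-f(v)}$. \textbf{(T3)} the numerator $a$ is a pseudometric on $0\{0,1\}^{n-1}$: this is the triangle inequality already established for $J^{\mathrm{num}}_{\max}$, which rests on \Cref{exe:max} and the relativized form of \Cref{jun17-23}. \textbf{(T4)} $A(u\mid v)=1$ iff $u=v$ or $u=0^n$, by \Cref{jun23-23}. \textbf{(T5)} $A(u\mid v\#w)\le A(u\mid v)$, again by freeing the extra track.

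The easy axioms then follow. On $0\{0,1\}^{n-1}$ the denominator $\max\{f(u),f(v)\}$ is zero only when $A(u)=A(v)=1$, i.e.\ only for $u=v=0^n$, where the convention $0/0=0$ applies; elsewhere $J_{\max}$ is a well-defined element of $[0,1]$ by (T1). Symmetry is immediate from the definition. By (T4), $a(x,x)=0$, so $J_{\max}(x,x)=0$; and if $J_{\max}(x,y)=0$ with $(x,y)\ne(0^n,0^n)$ then $A(x\mid y)=A(y\mid x)=1$, and on $0\{0,1\}^{n-1}$ the only way both can hold is $x=y$.

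For the triangle inequality, fix a triple and relabel so that $f(x)\le f(y)\le f(z)$; by symmetry of $J_{\max}$ it suffices to prove the three inequalities obtained by taking each of the three points as the pivot. When the pivot is $x$ or $y$ — not the most complex point — any denominator on the right that is smaller than $f(z)$ can be enlarged to $f(z)$, which only decreases that summand, and the inequality then collapses to the numerator triangle inequality (T3). The remaining case is $J_{\max}(x,y)\le J_{\max}(x,z)+J_{\max}(z,y)$ with $z$ the pivot; if $f(y)=f(z)$ the same collapse works, so assume $f(x)\le f(y)<f(z)$. Here $J_{\max}(x,y)=a(x,y)/f(y)\le 1$ by (T1), while the right-hand side is $(a(x,z)+a(z,y))/f(z)$; by (T2) both $a(x,z)$ and $a(z,y)$ are at least $f(z)-f(y)\ge 0$, so when $f(z)\ge f(x)+f(y)$ the right-hand side is $\ge(2f(z)-f(x)-f(y))/f(z)\ge 1$ and we are done.

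The genuinely hard subcase is $f(x)\le f(y)<f(z)<f(x)+f(y)$, and this is where I expect the main obstacle. It suffices to prove the sharpened triangle inequality $a(x,z)+a(z,y)\ge a(x,y)+(f(z)-f(y))$, since then $J_{\max}(x,z)+J_{\max}(z,y)\ge(a(x,y)+f(z)-f(y))/f(z)\ge a(x,y)/f(y)=J_{\max}(x,y)$, the last step being the elementary inequality $f(y)(f(z)-f(y))\ge a(x,y)(f(z)-f(y))$, valid because $a(x,y)\le f(y)$ by (T1). To prove the sharpened inequality I would start from the relativized product inequality of \Cref{jun17-23}: after freeing a track (T5) it gives $A(z\mid x)\le A(y\mid x)\,A(z\mid y)$ and, symmetrically, $A(z\mid y)\le A(x\mid y)\,A(z\mid x)$; combining these with (T2) to express $f(z)-f(y)$ through $A(z\mid y)$ should pin $a(x,z)+a(z,y)$ down from below. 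The subtlety — and the reason one cannot simply transcribe the Normalized Information Distance argument of Li et al.\ \cite{MR2103495} — is that symmetry of information genuinely fails in this setting; I expect the discreteness of $A$, which forces a conditional complexity to be at least $\ceil{A(z)/A(x)}$ rather than merely $A(z)/A(x)$, to be exactly what rescues the estimate in this last subcase.
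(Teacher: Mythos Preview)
Your setup and toolkit (T1)--(T5) are correct, and the reduction of the pivot-$z$ case to the ``sharpened'' inequality
\[
a(x,z)+a(z,y)\;\ge\;a(x,y)+\bigl(f(z)-f(y)\bigr)
\]
is valid. But you never prove that inequality. Everything from ``To prove the sharpened inequality I would start from\ldots'' onward is a plan, not an argument: the two consequences of \Cref{jun17-23} you write down, $A(z\mid x)\le A(y\mid x)\,A(z\mid y)$ and $A(z\mid y)\le A(x\mid y)\,A(z\mid x)$, are \emph{upper} bounds on the quantities you need to bound from \emph{below}, so they do not by themselves pin $a(x,z)+a(z,y)$ above $a(x,y)+(f(z)-f(y))$. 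Your closing appeal to ``discreteness'' (integrality forcing $A(z\mid x)\ge\lceil A(z)/A(x)\rceil$) is pure speculation, and in fact no integrality is used in the paper's argument. So the proof as written has a genuine gap precisely at the step you yourself label ``the genuinely hard subcase''.

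The paper avoids your case split on $f(z)$ versus $f(x)+f(y)$ altogether. It applies \Cref{mar29-2022} with $d=J^{\mathrm{num}}_{\max}$ (your $a$) and with the auxiliary function $a_{\mathrm{paper}}(x,y)=\log\max\{A(x),A(y)\}-d(x,y)$, so that the entire triangle inequality is reduced to checking $a_{\mathrm{paper}}(x,z)\le a_{\mathrm{paper}}(x,y)+d(y,z)$ for \emph{all} triples. That single condition is then handled by a two-case analysis on whether $A(x)\le A(z)$ or $A(z)<A(x)$, using only the inequalities $A(u\mid v)\le A(u)\le A(u\mid w)\,A(w)$. If you want to keep your own decomposition, you must actually establish the sharpened inequality; otherwise, follow the paper's route and carry out that two-case verification in full.
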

	The proof is similar to that of \Cref{jun20-23} and is given in the Appendix.
	For this metric the case $J(x,y)=1$ is perhaps not much easier to understand.
	We already have examples where $A(y\mid x)=A(y)$ even though $x$ and $y$ are both nontrivial, such as $x,y\in\{001,010,011\}$.
	We can ask whether this metric embeds in Euclidean space but it fails already at $n=4$.

	\paragraph{Conclusion and future work.}
	We have seen that while automatic complexity $A_N$ is quite different from Kolmogorov complexity $K$,
	surprisingly we can obtain metrics similar to those for $K$ using $\log A_N$ and a conditional version of $A_N$.
	In fact, these metrics are genuine metrics (not just up to some accuracy) and are computable.
	We also saw that the conditional version of $A_N$ sheds more light upon the problem of distinguishing $A_{N}=A_{Nu}$ and its word-counting version $A_{Ne}$. 
	In the future, we hope to characterize when $J=1$ and $J_{\max}=1$, and determine whether $A_N$, $J$ or $J_{\max}$ are efficiently computable.

	\bibliographystyle{splncs04}
	\bibliography{cocoon23}

\begin{thebibliography}{10}
\providecommand{\url}[1]{\texttt{#1}}
\providecommand{\urlprefix}{URL }
\providecommand{\doi}[1]{https://doi.org/#1}

\bibitem{MR1873931}
G\'{a}cs, P., Tromp, J.T., Vit\'{a}nyi, P.M.B.: Algorithmic statistics. IEEE
  Trans. Inform. Theory  \textbf{47}(6),  2443--2463 (2001).
  \doi{10.1109/18.945257}

\bibitem{MR0403800}
Ga\v{c}, P.: The symmetry of algorithmic information. Dokl. Akad. Nauk SSSR
  \textbf{218},  1265--1267 (1974)

\bibitem{MR345715}
Horibe, Y.: A note on entropy metrics. Information and Control  \textbf{22},
  403--404 (1973)

\bibitem{MR3386523}
Hyde, K.K., Kjos-Hanssen, B.: Nondeterministic automatic complexity of
  overlap-free and almost square-free words. Electron. J. Combin.
  \textbf{22}(3) (2015), paper 3.22, 18

\bibitem{MR3828751}
Kjos-Hanssen, B.: Automatic complexity of shift register sequences. Discrete
  Math.  \textbf{341}(9),  2409--2417 (2018). \doi{10.1016/j.disc.2018.05.015},
  \url{https://doi.org/10.1016/j.disc.2018.05.015}

\bibitem{MR3938583}
Kjos-Hanssen, B.: Few paths, fewer words: model selection with automatic
  structure functions. Exp. Math.  \textbf{28}(1),  121--127 (2019).
  \doi{10.1080/10586458.2017.1368048},
  \url{https://doi.org/10.1080/10586458.2017.1368048}

\bibitem{MR4313562}
Kjos-Hanssen, B.: An incompressibility theorem for automatic complexity. Forum
  Math. Sigma  \textbf{9},  Paper No. e62, 7 (2021). \doi{10.1017/fms.2021.58},
  \url{https://doi.org/10.1017/fms.2021.58}

\bibitem{MR4521608}
Kjos-Hanssen, B.: Interpolating between the {J}accard distance and an analogue
  of the normalized information distance. J. Logic Comput.  \textbf{32}(8),
  1611--1623 (2022). \doi{10.1093/logcom/exac069},
  \url{https://doi-org.eres.library.manoa.hawaii.edu/10.1093/logcom/exac069}

\bibitem{MR2103495}
Li, M., Chen, X., Li, X., Ma, B., Vit\'{a}nyi, P.M.B.: The similarity metric.
  IEEE Trans. Inform. Theory  \textbf{50}(12),  3250--3264 (2004).
  \doi{10.1109/TIT.2004.838101}, \url{https://doi.org/10.1109/TIT.2004.838101}

\bibitem{MR0162838}
Lyndon, R.C., Sch{\"u}tzenberger, M.P.: The equation {$a^{M}=b^{N}c^{P}$} in a
  free group. Michigan Math. J.  \textbf{9},  289--298 (1962)

\bibitem{Shallit:2008:SCF:1434864}
Shallit, J.: A Second Course in Formal Languages and Automata Theory. Cambridge
  University Press, New York, NY, USA, 1 edn. (2008)

\bibitem{MR1897300}
Shallit, J., Wang, M.W.: Automatic complexity of strings. J. Autom. Lang. Comb.
   \textbf{6}(4),  537--554 (2001), 2nd Workshop on Descriptional Complexity of
  Automata, Grammars and Related Structures (London, ON, 2000)

\bibitem{1188572}
Shannon, C.: The lattice theory of information. Transactions of the IRE
  Professional Group on Information Theory  \textbf{1}(1),  105--107 (1953).
  \doi{10.1109/TIT.1953.1188572}

\end{thebibliography}
\section*{Appendix}
	The distributions of $q=A_{Nu}(x\mid y)$ for $n\le 10$ are as follows (modes indicated in brackets):

	\begin{tabular}{l l l l l l l}
		$n\setminus q$ &1 &2 &3 &4 &5 &6\\
		0	&  [1]\\
		1	&  [1]\\
		2	&  [3]			& 1\\
		3	&  7			& [9]\\
		4	& 15			& [45]	& 4\\
		5	& 31			& [197]	& 28\\
		6	& 63			& [755]	& 191		& 15\\
		7	& 127			& [2299]& 1561		& 109\\
		8	& 255			& 5905	& [9604]	& 571	& 49\\
		9	& 511			& 14005	& [47416]	& 3205	& 399\\
		10	& 1023			& 31439	& [206342]	& 21066	& 2102	& 172\\
	\end{tabular}

	For example, the table entry for $(n,q)=(2,2)$ is 1 since the only instance of $x,y\in 0\{0,1\}$ with $A_N(x\mid y)=2$ is $A_N(01\mid 00)$.

\paragraph{Proof of \Cref{appthm}.}
	\begin{proof}
		The nontrivial part is the triangle inequality.
		Let
		\[a_{xy} = \log \max \{A(x),A(y)\} - \log \max \{A(x\mid y),A(y\mid x)\};\]
		then by \Cref{mar29-2022} it suffices to show that
		$a_{xz} \le a_{xy} + d_{yz}$. In other words:
		\begin{eqnarray*}
			&&	 \log \max \{A(x),A(z)\} - \log \max\{ A(x\mid z),A(z\mid x)\}\\
			&\le&\log \max\{ A(x),A(y)\} - \log \max \{A(x\mid y),A(y\mid x)\}+
				 \log \max \{A(y),A(z)\}.
		\end{eqnarray*}

		Equivalently, we must show that
		\begin{eqnarray*}
			&&	\max \{A(x\mid y),A(y\mid x)\}  \max \{A(x),A(z)\}\\
			&\le&
			\max \{A(x\mid z),A(z\mid x)\} \max \{A(y),A(z)\}  \max \{A(x),A(y)\}.
		\end{eqnarray*}
		 There are now two cases.

		\noindent \emph{Case 1: $A(x)\le A(z)$.} Then in fact
		\begin{eqnarray*}
			&&	\max \{A(x\mid y),A(y\mid x)\}  \max \{A(x),A(z)\}\\
			&\le&
			\max \{A(y),A(z)\}  \max \{A(x),A(y)\}.
		\end{eqnarray*}
		\noindent\emph{Case 2: $A(z)<A(x)$.} Then we use
		$A(x\mid y)\le A(x)\le A(x\mid z)A(z)$
		and
		$A(y\mid x)\le A(y)$.
	\end{proof}

\end{document}